\documentclass[11pt]{article}

\usepackage{fullpage}
\usepackage{amsfonts, amssymb, amsmath, amsthm}
\usepackage{latexsym}
\usepackage{url}
\usepackage{color}
\definecolor{DarkGray}{rgb}{0.1,0.1,0.5}
\usepackage[colorlinks=true,breaklinks, linkcolor= DarkGray,citecolor= DarkGray,urlcolor= DarkGray]{hyperref}	
\usepackage{graphicx}
\usepackage{subfigure}
\usepackage{cancel}
\usepackage{multirow}
\usepackage{enumitem}  
\usepackage{caption}	

\usepackage{import}
\usepackage{ltxtable}

\usepackage{calc}	

\def\place #1#2#3{\mspace{#2}\makebox[0pt]{\raisebox{#3}{#1}}\mspace{-#2}}	

\newcommand{\bra}[1]{{\langle#1|}}
\newcommand{\ket}[1]{{|#1\rangle}}
\newcommand{\braket}[2]{{\langle#1|#2\rangle}}
\newcommand{\ketbra}[2]{{\ket{#1}\!\bra{#2}}}
\newcommand{\abs}[1]{{\lvert #1\rvert}}	
\newcommand{\norm}[1]{{\| #1 \|}}

\newcommand{\beq}{\begin{equation}}
\newcommand{\eeq}{\end{equation}}

\def\tensor {\otimes}
\def\adjoint{\dagger} 
\def\C {{\bf C}}

\def\L {{\mathcal L}}

\def\cS {{\ensuremath{\cal S}}}


\DeclareMathOperator{\Span}{\operatorname{Span}}
\DeclareMathOperator{\Range}{\operatorname{Range}}



\newcommand{\ADVpm} {\mathrm{Adv}^{\pm}}

\newcommand{\B}{B}	
\def\rtensor {{r\tensor}}

\def\path {\xi}

\DeclareMathOperator{\wsizeop}{{\operatorname{wsize}}}		
\newcommand{\wsize}[1]{{\wsizeop({#1})}}

\newcommand{\wsizeS}[2]{{\wsizeop_{#2}({#1})}}

\newcommand{\wsizexS}[3]{{\wsizeop_{#3}({#1},{#2})}}

\DeclareMathOperator{\wsizefop}{{\operatorname{fwsize}}}	
\newcommand{\wsizef}[1]{{\wsizefop({#1})}}

\newcommand{\wsizefS}[2]{{\wsizefop_{#2}({#1})}}

\newcommand{\wsizefx}[2]{{\wsizefop({#1},{#2})}}
\newcommand{\wsizefxS}[3]{{\wsizefop_{#3}({#1},{#2})}}

\def\Ifree{I_\mathrm{free}}		

\def\jc {\varsigma}	

\DeclareMathOperator{\abst}{\operatorname{abs}}
\def\biadj {B}	

\DeclareMathOperator{\AND}{\ensuremath{\operatorname{AND}}}

\DeclareMathOperator{\OR}{\ensuremath{\operatorname{OR}}}
\DeclareMathOperator{\MAJ}{{\operatorname{MAJ}_3}}




\newcounter{sprows}

\newlength{\spheight}
\newlength{\spraise}

\newlength{\commentslength}

\newcommand{\rem}[1]{}

\newtheorem{theorem}{Theorem}[section]
\newtheorem{lemma}[theorem]{Lemma}
\newtheorem{corollary}[theorem]{Corollary}
\newtheorem{claim}[theorem]{Claim}

\newtheorem{definition}[theorem]{Definition}


\newfont{\subsubsecfnt}{ptmri8t at 10pt}
\renewcommand{\subparagraph}[1]{\smallskip{\subsubsecfnt #1.}}

\numberwithin{equation}{section} 

\newcommand{\eqnref}[1]{\hyperref[#1]{{(\ref*{#1})}}}
\newcommand{\thmref}[1]{\hyperref[#1]{{Theorem~\ref*{#1}}}}
\newcommand{\shortthmref}[1]{\hyperref[#1]{{Thm.~\ref*{#1}}}}
\newcommand{\lemref}[1]{\hyperref[#1]{{Lemma~\ref*{#1}}}}
\newcommand{\corref}[1]{\hyperref[#1]{{Corollary~\ref*{#1}}}}
\newcommand{\defref}[1]{\hyperref[#1]{{Definition~\ref*{#1}}}}
\newcommand{\secref}[1]{\hyperref[#1]{{Section~\ref*{#1}}}}
\newcommand{\figref}[1]{\hyperref[#1]{{Figure~\ref*{#1}}}}
\newcommand{\tabref}[1]{\hyperref[#1]{{Table~\ref*{#1}}}}
\newcommand{\remref}[1]{\hyperref[#1]{{Remark~\ref*{#1}}}}
\newcommand{\appref}[1]{\hyperref[#1]{{Appendix~\ref*{#1}}}}
\newcommand{\claimref}[1]{\hyperref[#1]{{Claim~\ref*{#1}}}}
\newcommand{\propref}[1]{\hyperref[#1]{{Proposition~\ref*{#1}}}}
\newcommand{\exampleref}[1]{\hyperref[#1]{{Example~\ref*{#1}}}}
\newcommand{\conjref}[1]{\hyperref[#1]{{Conjecture~\ref*{#1}}}}


\allowdisplaybreaks[1]

\begin{document}

\title{Faster quantum algorithm for evaluating game trees}
\author{%
Ben W.~Reichardt%
  \thanks{School of Computer Science and Institute for Quantum Computing, University of Waterloo.} 
}
\date{}


\maketitle

\begin{abstract}
We give an $O(\sqrt n \log n)$-query quantum algorithm for evaluating size-$n$ AND-OR formulas.  Its running time is poly-logarithmically greater after efficient preprocessing.  Unlike previous approaches, the algorithm is based on a quantum walk on a graph that is not a tree.  Instead, the algorithm is based on a hybrid of direct-sum span program composition, which generates tree-like graphs, and a novel tensor-product span program composition method, which generates graphs with vertices corresponding to minimal zero-certificates.  

For comparison, by the general adversary bound, the quantum query complexity for evaluating a size-$n$ read-once AND-OR formula is at least $\Omega(\sqrt n)$, and at most $O(\sqrt n \log n / \log \log n)$.  However, this algorithm is not necessarily time efficient; the number of elementary quantum gates applied between input queries could be much larger.  Ambainis et al.\ have given a quantum algorithm that uses $\sqrt n \, 2^{O(\sqrt{\log n})}$ queries, with a poly-logarithmically greater running time.  
\end{abstract}

\section{Introduction}

An AND-OR formula is a rooted tree in which the internal nodes correspond to AND or OR gates.  The size of the formula is the number of leaves.  To a formula $\varphi$ of size $n$ and a numbering of the leaves from $1$ to $n$ corresponds a function $\varphi : \{0,1\}^n \rightarrow \{0,1\}$.  This function is defined on input $x \in \{0,1\}^n$ by placing bit $x_j$ on the $j$th leaf, for $j = 1, 2, \ldots, n$, and evaluating the gates toward the root.   Evaluating an AND-OR formula solves the decision version of a MIN-MAX tree, also known as a two-player game tree.  

Let $Q(\varphi)$ be the quantum query complexity for evaluating the size-$n$ AND-OR formula $\varphi$.  Quantum query complexity is the generalization of classical decision tree complexity to quantum algorithms.  Now the general adversary bound of $\varphi$ is $\ADVpm(\varphi) = \sqrt n$~\cite{BarnumSaks04readonce, HoyerLeeSpalek07negativeadv}, and thus $Q(\varphi) = \Omega(\ADVpm(\varphi)) = \Omega(\sqrt n)$.  Since the general adversary bound is nearly tight for any boolean function, in particular $Q(\varphi) = O(\sqrt n \log n / \log \log n)$~\cite{Reichardt09spanprogram}.  (Interpreted in a different way, this says that the square of the quantum query complexity of evaluating a boolean function is almost a lower bound on the read-once formula size for that function~\cite{LaplanteLeeSzegedy06adversary}.)  However, the algorithm from~\cite{Reichardt09spanprogram} is not necessarily even time efficient.  That is, even though the number of queries to the input is nearly optimal, the number of elementary quantum gates applied between input queries could be much larger.  

Ambainis et al.~\cite{AmbainisChildsReichardtSpalekZhang07andor} have given a quantum algorithm that evaluates $\varphi$ using $\sqrt n \, 2^{O(\sqrt{\log n})}$ queries, with a running time only poly-logarithmically larger after efficient preprocessing.  We reduce the query overhead from $2^{O(\sqrt {\log n})}$ to only $O(\log n)$, with the same preprocessing assumption.   

\begin{theorem} \label{t:andor}
Let $\varphi$ be an AND-OR formula of size $n$.  Then $\varphi$ can be evaluated with error at most $1/3$ by a quantum algorithm that uses $O(\sqrt{n} \log n)$ input queries.  After polynomial-time classical preprocessing independent of the input, and assuming unit-time coherent access to the preprocessed string, the running time of the algorithm is $\sqrt n \, (\log n)^{O(1)}$.  
\end{theorem}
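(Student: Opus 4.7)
The plan is to construct a span program $P$ for $\varphi$ whose witness size (positive times negative) is $O(\sqrt n \log n)$, and then to implement the quantum walk derived from $P$ in time $\sqrt n\,(\log n)^{O(1)}$. By the span program/quantum query equivalence of~\cite{Reichardt09spanprogram}, the witness-size bound yields the stated query complexity, while time efficiency follows from the fact that the walk graph underlying $P$ can be generated locally from a classical preprocessed description of $\varphi$.

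The new ingredient is a hybrid composition of span programs. The standard direct-sum composition, iterated over $\varphi$, produces a span program whose underlying weighted graph is essentially $\varphi$ itself, a tree; this is what~\cite{AmbainisChildsReichardtSpalekZhang07andor} uses, and it incurs a multiplicative witness-size penalty at every level of depth balancing, compounding to $2^{O(\sqrt{\log n})}$. I would introduce a second composition method, a \emph{tensor-product composition}, whose graph is not a tree but has vertices indexed by the minimal $0$-certificates of the composed sub-formula. Its canonical positive and negative witnesses come from the list of minimal zero-certificates of the child sub-formulas; by working out their norms one obtains witness-size bounds that scale better than direct-sum composition precisely when the sub-formula is wide and shallow, exactly the case where direct-sum composition wastes the most. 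I would prove correctness of this composition and derive its witness-size recurrence.

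The main obstacle is then the witness-size bookkeeping for a hybrid construction that uses direct-sum composition at some levels and tensor-product composition at others. Given a (re-)balanced version of $\varphi$ of depth $O(\log n)$ with sub-formulas of controlled size, I would partition $\varphi$ into strata and assign each stratum the composition it handles more efficiently: direct-sum for deep, unbalanced pieces where minimal certificates proliferate, and tensor-product for wide, shallow pieces where direct-sum sums of squares are wasteful. The goal is to prove that with such a partition the per-level overhead is only $O(1)$ and accumulates to a total factor of $O(\log n)$ over the base cost $\sqrt n$. The re-balancing reduction, which allows us to assume $\varphi$ has approximately balanced structure while preserving its quantum query complexity up to constants, can be borrowed from~\cite{AmbainisChildsReichardtSpalekZhang07andor}.

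Finally, for the time bound I would show that the hybrid span program has $n\,(\log n)^{O(1)}$ vectors and that, given the classically preprocessed description of $\varphi$ (including, for each sub-formula in the tensor-product strata, a table of minimal zero-certificates), neighbor queries in the weighted walk graph can be implemented coherently with $(\log n)^{O(1)}$ elementary gates. Combining this with phase-estimation--based evaluation of the span program yields a total running time of $\sqrt n\,(\log n)^{O(1)}$, as claimed. The preprocessing itself consists of balancing $\varphi$ and enumerating minimal $0$-certificates of the relevant sub-formulas, both polynomial-time classical tasks that are independent of the input.
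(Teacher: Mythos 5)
Your high-level plan matches the paper's in one respect: both use a hybrid of direct-sum and tensor-product span program composition, with the tensor-product graph indexed by minimal zero-certificates, and both then invoke the span-program-to-algorithm machinery of~\cite{Reichardt09spanprogram} for the query and time bounds. But the specifics of your stratification diverge from the paper's construction in two places that each sink the argument.

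First, you assume you can rebalance $\varphi$ to depth $O(\log n)$ while ``preserving its quantum query complexity up to constants.'' No such rebalancing exists. The known rebalancing for AND-OR formulas (Bshouty--Cleve--Eberly, Bonet--Buss) gives depth $2^{O(\sqrt{\log n})}$ at size $n\,2^{O(\sqrt{\log n})}$, and these $2^{O(\sqrt{\log n})}$ terms are exactly the overhead you are trying to beat --- if your proof begins by importing them, you inherit the \cite{AmbainisChildsReichardtSpalekZhang07andor} bound, not $O(\sqrt n\log n)$. Getting depth $O(\log n)$ would require a polynomial size blowup, which is far worse. The whole point of the paper's construction is that it makes \emph{no} use of rebalancing: the formula is kept as-is (after expanding to fan-in two), and the hybrid composition is what replaces rebalancing. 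Second, your regime assignment is inverted. You propose direct-sum for ``deep, unbalanced pieces where minimal certificates proliferate'' and tensor-product for ``wide, shallow pieces.'' This is backwards on both counts: a maximally unbalanced (skew) subformula has only \emph{linearly} many minimal zero-certificates (\lemref{t:reducedtensorproductANDORcomposemaxunbalanced}), while a balanced subformula has \emph{exponentially} many (as noted after \lemref{t:reducedtensorproductANDORcomposemaxunbalanced}); and direct-sum composition is precisely what fails on deep unbalanced pieces, where it incurs overhead proportional to depth. The paper therefore does the opposite of what you propose: it marks, for each gate, the edge to the \emph{smaller} subformula, so that the unmarked edges form skew paths; reduced tensor-product composition is applied \emph{along} these unbalanced paths (where minimal certificates are few), and direct-sum composition is applied \emph{across} the marked ``checkpoint'' edges to glue the paths together and keep the graph norm and degree under control (Lemmas~\ref{t:andorunbalancednorm} and~\ref{t:andorunbalanceddegree}). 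An additional checkpointing rule caps the product $\prod_v \alpha(v)$ at $O(1)$ along each path, which is what makes the norm bound work. If you carried out your proposal as stated --- direct-sum on skew pieces, tensor-product on balanced pieces --- you would hit both the depth penalty and the exponential certificate blowup simultaneously.

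There is also a quantitative gap in your bookkeeping: ``per-level overhead $O(1)$ accumulating to $O(\log n)$'' is the right target, but with $O(\log n)$ levels a multiplicative per-level factor of $O(1)$ gives $n^{O(1)}$, not $O(\log n)$. The paper's induction (\lemref{t:ANDORformulaevaluationtruefalsewitness}) instead shows that the full-witness-size ratio $\wsizef{P_{\varphi'_{v'}}}/\sqrt{s_{v'}}$ satisfies a recurrence whose additive increments shrink like $1/\sqrt{m}$ when passing to a large child and whose gain of $\kappa$ is only incurred when the size halves; this is what caps the total at $\kappa\log n$. That careful recurrence, along with the checkpointing rule that enables it (\lemref{t:andorsizegrowth}), is the missing core of the argument.
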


An improvement from $2^{O(\sqrt{\log n})}$ to $O(\log n)$ overhead may not be significant for eventual practical applications.  Additionally, the algorithm does not obviously bring us closer to knowing whether the general adversary bound is tight for quantum query complexity, because its overhead is larger than the $O(\log n / \log \log n)$ overhead of the query algorithm from~\cite{Reichardt09spanprogram}.  (It may be that the analysis used to prove \thmref{t:andor} is somewhat loose.)  

However, the idea behind the algorithm is still of interest, as it provides a new solution to the problem of evaluating AND-OR formulas with large depth.  If $\varphi$ is a formula on $n$ variables, with depth $d$, then the~\cite{AmbainisChildsReichardtSpalekZhang07andor} algorithm, applied directly, evaluates $\varphi$ using $O(\sqrt {n d})$ queries.\footnote{Actually, \cite[Sec.~7]{AmbainisChildsReichardtSpalekZhang07andor} only shows a bound of $O(\sqrt n \, d^{3/2})$ queries, but this can be improved to $O(\sqrt n \, d )$ using the bounds on $\sigma_\pm(\varphi)$ below \cite[Def.~1]{AmbainisChildsReichardtSpalekZhang07andor}.  The improved analysis of the same algorithm in~\cite{Reichardt09unbalancedformula} tightens this to $O(\sqrt {n d})$, which is also the depth-dependence found by~\cite{ambainis07nand}.}  Thus for a highly unbalanced formula, with depth $d = \Omega(n)$, the quantum algorithm performs no better asymptotically than the trivial $n$-query classical algorithm.  Fortunately, Bshouty et al.\ have given a ``rebalancing" procedure that takes AND-OR formula $\varphi$ as input and outputs an equivalent AND-OR formula $\varphi'$ with depth $d' = 2^{O(\sqrt{\log n})}$ and size $n' = n \, 2^{O(\sqrt {\log n})}$~\cite{bce:size-depth, bb:size-depth}.  Appealing to this result, \cite{AmbainisChildsReichardtSpalekZhang07andor} evaluates $\varphi'$ using $O(\sqrt {n' d'} ) = \sqrt n \, 2^{O(\sqrt{\log n})}$ queries.  

The algorithm behind \thmref{t:andor} gets around the large-depth problem without using formula rebalancing.  Instead, the algorithm is based on a novel method for constructing bipartite graphs with certain useful spectral properties.  Ambainis et al.\ run a quantum walk on a graph that matches the formula tree, except with certain edge weights and boundary conditions at the leaves.  This tree comes from glueing together elementary graphs for each gate.  We term this composition method ``direct-sum" composition, because the graph's adjacency matrix acts on a space that is the direct sum of spaces for each individual gate.  Direct-sum composition incurs severe overhead on highly unbalanced formulas, making the query complexity at least proportional to the formula depth.  

The new algorithm begins with the same elementary graphs, with even the same edge weights.  However, it composes them using a kind of ``tensor-product" graph composition method.  Overall, this results in graphs that have much lower depth, although they are not trees.  By carefully combining this method with direct-sum composition, we obtain a graph on which the algorithm runs a quantum walk.  The two approaches are summarized in more detail in \secref{s:andorcomparison} below.  

The general formula-evaluation problem is an ideal example of a recursively defined problem.  The evaluation of a formula is the evaluation of a function, the inputs of which are themselves independent formulas.  As argued in a companion paper~\cite{Reichardt09unbalancedformula}, quantum computers are particularly well suited for evaluating formulas.  Unlike the situation for classical algorithms, for a large class of formulas the optimal quantum algorithm can work following the formula's recursive structure.  Since \thmref{t:andor} does not require AND-OR formula rebalancing, it extends this quantum recursion paradigm.  Besides its conceptual appeal, this may also be important because the effect of rebalancing on the general quantum adversary bound appears to be less tractable for formulas over gate sets beyond AND and OR.  Therefore the rebalancing step that aids~\cite{AmbainisChildsReichardtSpalekZhang07andor} might not be useful to solve the large-depth problem on more general formulas.  The hybrid graph composition method is another tool that might generalize more easily.  

Our algorithm is developed and analyzed using the framework relating span programs and quantum algorithms from~\cite{Reichardt09spanprogram}.  The connection to time-efficient quantum algorithms, especially for evaluating unbalanced formulas over arbitrary fixed, finite gate sets, has been developed further in~\cite{Reichardt09unbalancedformula}.  \tabref{f:formulastable} summarizes some results for the formula-evaluation problem in the classical and quantum models; for a more detailed and inclusive review, see~\cite{Reichardt09unbalancedformula}.  \secref{s:background} below will go over only the history of quantum algorithms for evaluating AND-OR formulas.  

\begin{table}
\begin{center}
\begin{tabular}{|c|c@{~}c|c@{$\!\!\!\!$}c|}
\hline \hline
  &  \multicolumn{2}{c|}{Randomized, zero-error} & \multicolumn{2}{c|}{Quantum bounded-error} \\
Formula $\varphi$ & \multicolumn{2}{c|}{query complexity $R(\varphi)$} & \multicolumn{2}{c|}{query complexity $Q(\varphi)$} \\
\hline
$\OR_n$ & $n$& & $\Theta(\sqrt n)$&\cite{grover:search, BennettBernsteinBrassardVazirani97upper} \\
\hline
Balanced $\AND_2$-$\OR_2$ & $\Theta(n^\alpha)$&\cite{sw:and-or} & $\Theta(\sqrt n)$&\cite{fgg:and-or, AmbainisChildsReichardtSpalekZhang07andor} \\
\hline
Well-balanced AND-OR & tight recursion&\cite{sw:and-or} & & \\
\hline
Approx.-balanced AND-OR & & 
& $\Theta(\sqrt n)$&\cite{AmbainisChildsReichardtSpalekZhang07andor, Reichardt09unbalancedformula} \\
\hline
Arbitrary AND-OR & $\Omega(n^{0.51})$&\cite{HeimanWigderson91generalANDOR} & 
\begin{minipage}[l]{0.7in}\begin{center}$\Omega(\sqrt n)$\\$O(\sqrt n \log n)$\end{center}\end{minipage}&\begin{minipage}[l]{0.7in}\begin{center}\cite{BarnumSaks04readonce}\\ (\shortthmref{t:andor})\end{center}\end{minipage} \\
\hline 
Balanced $\MAJ$ ($n = 3^d$) & $\Omega\big((7/3)^d\big)$, $O(2.654^d)$&\cite{JayramKumarSivakumar03majority} & $\Theta(2^d)$&\cite{ReichardtSpalek08spanprogram} \\
\hline
Balanced over $\cS$
 & 
 & & $\Theta(\ADVpm(\varphi))$&\cite{Reichardt09spanprogram} \\
\hline
Almost-balanced over $\cS$
 & 
 & & $\Theta(\ADVpm(\varphi))$&\cite{Reichardt09unbalancedformula} \\
\hline \hline
\end{tabular}
\end{center}
\caption{Comparison of some classical and quantum query complexity results for formula evaluation.  Here the exponent $\alpha$ is given by $\alpha = \log_2 (\frac{1 + \sqrt{33}}{4}) \approx 0.753$, and $\cS$ is any fixed, finite gate set.  Under certain assumptions, the algorithms' running times are only poly-logarithmically slower.} 
\label{f:formulastable}
\end{table}

\subsection{Idea of the algorithm} \label{s:andorcomparison}

As an example to illustrate the main idea of our algorithm, consider the AND-OR formula $\varphi(x) = \big( [ (x_1 \wedge x_2) \vee x_3 ] \wedge x_4 \big) \vee \big( x_5 \wedge [x_6 \vee x_7] \big)$, where $\wedge$ and $\vee$ denote $\AND$ and $\OR$, respectively.  In \figref{f:formula}, this formula is represented as a tree.  

\begin{figure}
\centering
\begin{tabular}{c@{$\quad$}c}
\subfigure[]{\label{f:formula}\includegraphics[scale=1]{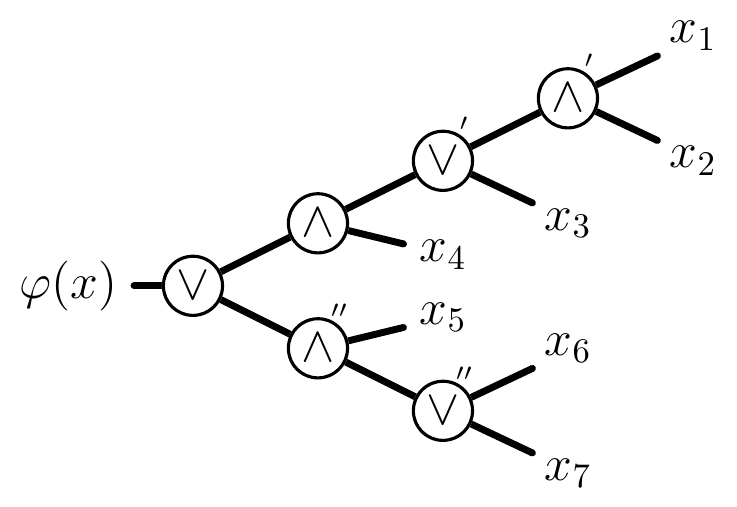}} & 
\subfigure[]{\label{f:acrszgraphextraedges}\includegraphics[scale=1]{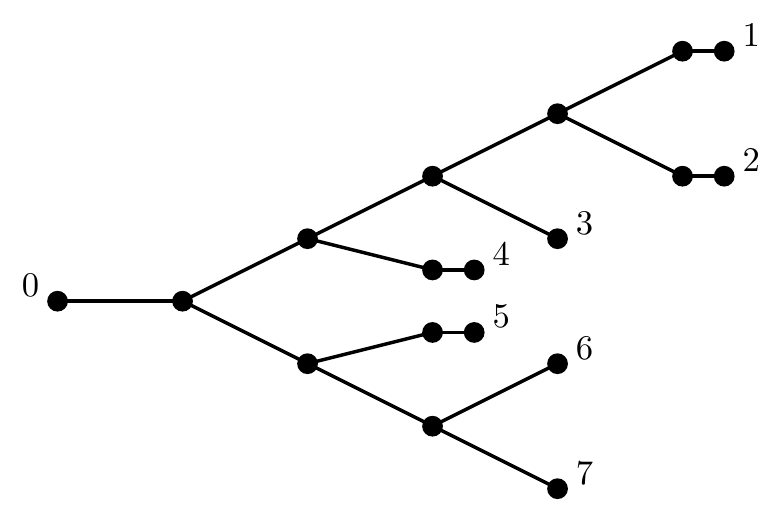}} 
\end{tabular}
\caption{In (a), the AND-OR formula $\varphi(x) = \big( [ (x_1 \wedge x_2) \vee x_3 ] \wedge x_4 \big) \vee \big( x_5 \wedge [x_6 \vee x_7] \big)$ is represented as a tree, with some gates marked for future reference.  Figure~(b) shows the graph used by~\cite{AmbainisChildsReichardtSpalekZhang07andor}.} \label{f:exampledirectsum}
\end{figure}

The \cite{AmbainisChildsReichardtSpalekZhang07andor} algorithm starts with the graph in \figref{f:acrszgraphextraedges}, essentially the same as the formula tree, except with extra edges attached to the root and some leaves.  The edges should be weighted, but for this intuitive discussion take every edge's weight to be one.  

Consider an input $x \in \{0,1\}^7$.  Modify the graph by attached a dangling edge to vertex $j$ if $x_j = 0$, for $j = 1, 2, \ldots, 7$.  Then it is simple to see that the resulting graph has an eigenvalue-zero eigenvector supported on vertex $0$ (or rather its adjacency matrix does) if and only if $\varphi(x) = 1$.  If we added an edge off vertex $0$, then the resulting graph would have an eigenvalue-zero eigenvector supported on the new root if and only if $\varphi(x) = 0$.  

The \cite{AmbainisChildsReichardtSpalekZhang07andor} algorithm takes advantage of this property by running (phase estimation on) a quantum walk that starts at the root vertex.  The algorithm detects the eigenvalue-zero eigenvector in order to evaluate the formula.  

This algorithm does not work well on formulas with large depth.  For example, consider the maximally unbalanced formula on $n$ inputs, a skew tree.  The corresponding graph is nearly the length-$n$ line graph.  It will still have an eigenvalue-zero eigenvector supported on the root if and only if the formula evaluates to zero.  However, the algorithm will require $\Omega(n)$ time to detect this eigenvector, because its squared support on the root is only $O(1/n)$ after normalization, and because the spectral gap around zero will also be $O(1/n)$.  (The spectral gap determines the precision of the phase estimation procedure, and hence its running time.  It corresponds to the squared support of eigenvalue-zero eigenvectors by~\cite[Theorem~8.7]{Reichardt09spanprogram}.)  Alternatively, one can argue that the algorithm requires $\Omega(n)$ time because it takes that long even to \emph{reach} the deepest leaf vertices.  

Now consider the graph in \figref{f:tensorproductgraph}.  Again modify the graph according to an input $x \in \{0,1\}^7$ by attaching dangling edges to those vertices with $x_j = 0$.  Considering a few examples should convince the reader that the resulting graph has an eigenvalue-zero eigenvector supported on vertex $0$ if and only if $\varphi(x) = 0$.  Note, though, that the distance from ``output" vertex $0$ to any of the ``input" vertices $1$ to $7$ is at most two.  The graph is also far from being a tree.  Its main feature is that the ``constraint" vertices---the vertices aside from $0, 1, \ldots, 7$---are in one-to-one correspondence with minimal zero-certificates to $\varphi$.  For example, for $x = 0101011$, $\varphi(x) = 0$, but flipping any bit of $x$ from $0$ to $1$ changes the formula evaluation from $0$ to $1$.  The corresponding constraint vertex is connected to exactly those input vertices $j$ for which $x_j = 0$.  

\begin{figure}
\centering
\begin{tabular}{c@{$\quad$}c}
\subfigure[]{\label{f:tensorproductgraph}\includegraphics[scale=1]{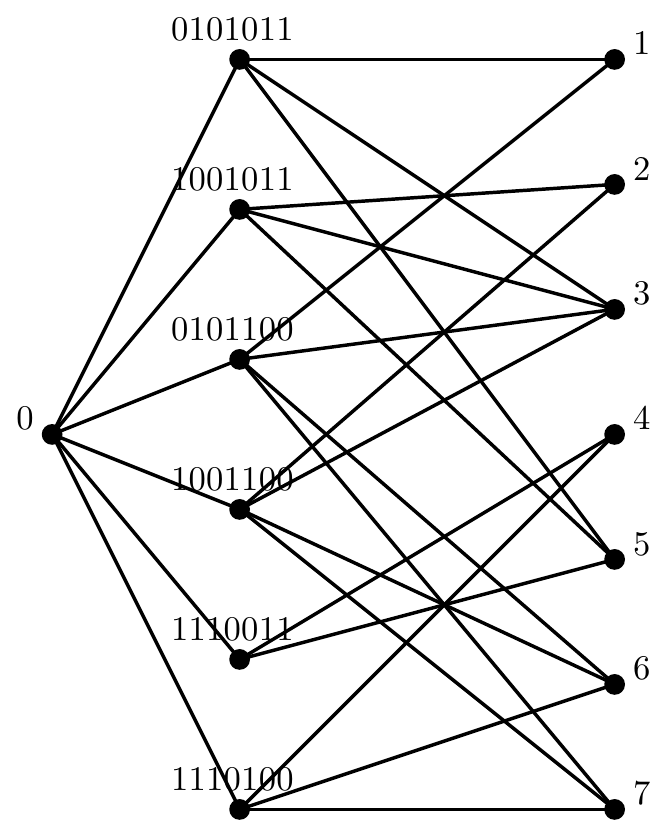}} &
\subfigure[]{\label{f:checkpointedgraph}\includegraphics[scale=1]{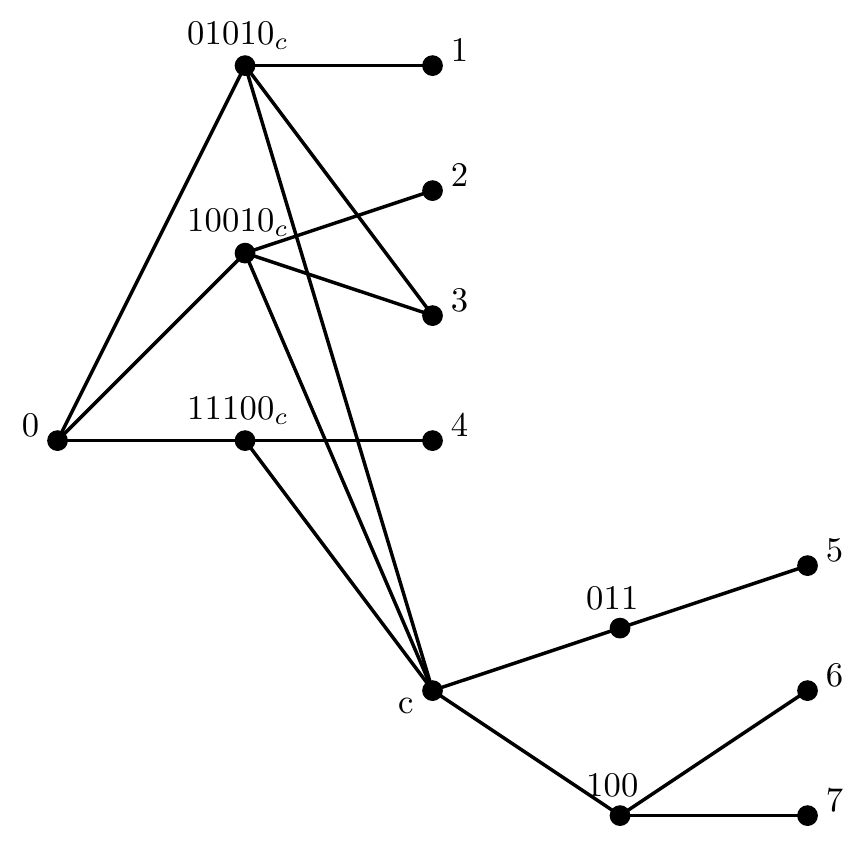}}
\\ \\
\multicolumn{2}{c}{
\subfigure[]{
$\left(\begin{matrix}
\delta \alpha_1 \delta' \alpha_1'  & \epsilon_1 \beta_1 \epsilon_1' \beta_1' & 0 & \epsilon_1 \beta_1 \epsilon_2' \alpha_1' & 0 & \epsilon_2 \alpha_1 \delta' \alpha_1' & 0 & 0 & 0 \\
\delta \alpha_1 \delta' \alpha_2' & 0 & \epsilon_1 \beta_1 \epsilon_1' \beta_2' & \epsilon_1 \beta_1 \epsilon_2' \alpha_2' & 0 & \epsilon_2 \alpha_1 \delta' \alpha_2' & 0 & 0 & 0 \\
\delta \alpha_2 \delta' \alpha' & 0 & 0 & 0 & \epsilon_1 \beta_2 \delta' \alpha' & \epsilon_2 \alpha_2 \delta' \alpha'  & 0 & 0 & 0 \\
0 & 0 & 0 & 0 & 0 & \alpha_1'' \delta'' & \beta_1'' \delta'' & 0 & 0 \\
0 & 0 & 0 & 0 & 0 & \alpha_2'' \delta'' & 0 & \beta_2'' \epsilon_1'' & \beta_2'' \epsilon_2''
\end{matrix}\right)
\place{{\small 0}}{-618mu}{36pt}
\place{{\small 1}}{-537mu}{36pt}
\place{{\small 2}}{-455mu}{36pt}
\place{{\small 3}}{-370mu}{36pt}
\place{{\small 4}}{-290mu}{36pt}
\place{{\small c}}{-210mu}{36pt}
\place{{\small 5}}{-141mu}{36pt}
\place{{\small 6}}{-88mu}{36pt}
\place{{\small 7}}{-37mu}{36pt}
\place{{\small $01010_c$}}{-690mu}{24pt}
\place{{\small $10010_c$}}{-690mu}{12pt}
\place{{\small $11100_c$}}{-690mu}{0pt}
\place{{\small $011$}}{-690mu}{-12pt}
\place{{\small $100$}}{-690mu}{-24pt}
$}
}
\end{tabular}
\caption{
For the formula from \figref{f:formula}, we display the reduced tensor-product composed span program in (a), and the hybrid-composed span program in (b) for the case when the edge to the subformula on inputs $x_5, x_6, x_7$ is checkpointed (see \secref{s:hybridconstruction}).  In each case, the output vertex corresponding to the target vector is labeled $0$ and the input bits are labeled from $1$ to $7$.  The other vertices in (a) are labeled according to the corresponding maximal false inputs (see \defref{t:maximalfalsedef} and \lemref{t:reducedtensorproductANDORcompose}).  In (b), the checkpoint vertex is labeled c---this vertex corresponds to the free input vector added in direct-sum composition (see \defref{t:directsumcomposedef})---and the other vertices have been labeled according to maximal false inputs with the checkpointed subformula grouped together.  In (c), the edge weights for the graph in (b) are specified using the biadjacency matrix.  The parameters $\alpha, \delta, \ldots$ are marked with $'$ or $''$ to indicate to which gate in the formula of \figref{f:formula} they correspond.} \label{f:exampletensorproduct}
\end{figure}

The graphs in Figures~\ref{f:acrszgraphextraedges} and~\ref{f:tensorproductgraph} represent two extremes of a family of graphs that evaluate $\varphi$ in the same manner.  The graph in \figref{f:acrszgraphextraedges} can be seen (essentially) as coming from plugging together along $\varphi$ the individual graphs in Figures~\ref{f:orgate} and~\ref{f:andgate} that evaluate single OR and AND gates in the same manner.  We term this ``direct-sum" composition.  The graph in \figref{f:tensorproductgraph} comes from a certain ``tensor-product" composition of the graphs from Figures~\ref{f:orgate} and~\ref{f:andgate}.  (\figref{f:reducedtensorproductcompositionexamples} shows several other examples of tensor-product graph composition.)  

With the correct choice of edge weights, tensor-product composition leads to graphs for which the squared support of the unit-normalized eigenvalue-zero eigenvector on the root is $\Omega(1/\sqrt n)$ for a formula of size $n$.  This implies by~\cite[Theorem~9.1]{Reichardt09spanprogram} a quantum algorithm that uses $O(\sqrt n \log n / \log \log n)$ queries to evaluate the formula.  There is no issue with deep formulas.  However, the algorithm will not be time efficient.  Essentially, the problem is that the number of vertices can be exponentially large in $n$, as can be their degrees, which makes it difficult to implement a quantum walk efficiently.  

\thmref{t:andor} gets around this problem by using a combination of the two methods of composing graphs.  For example, the graph in \figref{f:checkpointedgraph} also evaluates the formula $\varphi$ in the same manner.  One can think of the vertex $c$ as evaluating the subformula $x_5 \wedge (x_6 \vee x_7)$.  The subgraph it cuts off has been composed in a direct-sum manner with a tensor-product-composed graph for the formula $([(x_1 \wedge x_2) \vee x_3] \wedge x_4) \vee x_c$.  By combining the two composition methods, we can manage the tradeoffs, controlling the maximum degree and norm of the graph, while also avoiding the formula depth problem.  

Although our algorithm can be presented and analyzed entirely in terms of graphs, we will present it in terms of span programs.  Span programs are part of a framework for designing and analyzing quantum algorithms~\cite{Reichardt09spanprogram}, for which \secref{s:spanprograms} gives some necessary background.  Eigenvalue-zero eigenvectors correspond to span program witnesses and the squared support on the root corresponds to a complexity measure known as the full witness size~\cite{Reichardt09unbalancedformula}.  The two graph composition techniques described above correspond to different ways of composing general span programs.

\subsection{Review of quantum algorithms for evaluating AND-OR formulas} \label{s:background}

Research on the formula-evaluation problem in the quantum model began with the simple $n$-bit OR function, $\OR_n$.  Grover gave a quantum algorithm for evaluating $\OR_n$ with bounded one-sided error using $O(\sqrt n)$ oracle queries and $O(\sqrt n \log \log n)$ time~\cite{grover:search, grover:search-time}.  

Grover's algorithm, together with repetition for reducing errors, can be applied recursively to speed up the evaluation of more general AND-OR formulas.  For example, the size-$n$ AND-OR formula $\AND_{\sqrt n} \circ ( \OR_{\sqrt n}, \ldots, \OR_{\sqrt n} )$ can be evaluated in $O(\sqrt n \log n)$ queries.  Here the extra logarithmic factor comes from using repetition to reduce the error probability of the inner $\OR_{\sqrt n}$ evaluation procedure from a constant to be polynomially small.  Call a formula \emph{layered} if the gates at the same depth are the same.  Buhrman, Cleve and Wigderson show that the above argument can be applied to evaluate a layered, depth-$d$ AND-OR formula on $n$ inputs using $O(\sqrt n \log^{d-1} n)$ queries~\cite[Theorem~1.15]{BuhrmanCleveWigderson98}.  

H\o yer, Mosca and de Wolf \cite{HoyerMoscaDeWolf03berror-search} consider the case of a unitary input oracle $\tilde O_x$ that maps 
\beq
\tilde O_x: \, \ket \varphi \otimes \ket j \otimes \ket b \otimes \ket 0
\mapsto 
\ket \varphi \otimes \ket j \otimes \big( \ket{b \oplus x_j} \otimes \ket{\psi_{x,j,x_j}} + \ket{b \oplus \overline x_j} \otimes \ket{\psi_{x,j,\overline x_j}} \big)
 \enspace ,
\eeq
where $\ket{\psi_{x,j,x_j}}$, $\ket{\psi_{x,j,\overline x_j}}$ are pure states with $\norm{\ket{\psi_{x,j,x_j}}}^2 \geq 2/3$.  Such an oracle can be implemented when the function $j \mapsto x_j$ is computed by a bounded-error, randomized subroutine~\cite{NielsenChuang00}.  H\o yer et al.\ allow access to $\tilde O_x$ and $\tilde O_x^{-1}$, both at unit cost, and show that $\OR_n$ can still be evaluated using $O(\sqrt n)$ queries.  This \emph{robustness} result implies that the $\log n$ steps of repetition used by~\cite{BuhrmanCleveWigderson98} are not necessary, and a depth-$d$ layered AND-OR formula can be computed in $O(\sqrt n \, c^{d-1})$ queries, for some constant $c > 1000$.  This gives an $O(\sqrt n)$-query quantum algorithm for the case that the depth $d$ is constant, but is not sufficient to cover, e.g., the complete, binary AND-OR formula, for which $d = \log_2 n$.  

A breakthrough for the formula-evaluation problem came in 2007, when Farhi, Goldstone and Gutmann presented a quantum algorithm for evaluating complete, binary AND-OR formulas~\cite{fgg:and-or}.  Their algorithm is not based on iterating Grover's algorithm in any way, but instead runs a quantum walk---analogous to a classical random walk---on a graph derived from the AND-OR formula graph as in \figref{f:exampledirectsum}.  The algorithm runs in time $O(\sqrt n)$ in a certain continuous-time query model.  

Ambainis et al.\ discretized the~\cite{fgg:and-or} algorithm by reinterpreting a correspondence between discrete-time random and quantum walks due to Szegedy~\cite{Szegedy04walkfocs} as a correspondence between \emph{continuous-time} and discrete-time quantum walks~\cite{AmbainisChildsReichardtSpalekZhang07andor}.  Applying this correspondence to quantum walks on certain \emph{weighted} graphs, they gave an $O(\sqrt n)$-query quantum algorithm for evaluating ``approximately balanced" formulas, extended in~\cite{Reichardt09unbalancedformula}.  Using the formula rebalancing procedure of~\cite{bce:size-depth, bb:size-depth}, the~\cite{AmbainisChildsReichardtSpalekZhang07andor} algorithm uses $\sqrt n 2^{O(\sqrt{\log n})}$ queries in general.  This is nearly optimal, since the adversary bound gives an $\Omega(\sqrt n)$ lower bound~\cite{BarnumSaks04readonce}.  

This author has given an $O(\sqrt n \log n / \log \log n)$-query quantum algorithm for evaluating arbitrary size-$n$ AND-OR formulas~\cite{Reichardt09spanprogram}.  In fact, the result is more general, stating that the general adversary bound is nearly tight for \emph{every} boolean function.  However, unlike the earlier AND-OR formula-evaluation algorithms, the algorithm is not necessarily time efficient.

\section{Span programs} \label{s:spanprograms}

In this section, we will briefly recall some of the definitions and results on span programs from~\cite{Reichardt09spanprogram, Reichardt09unbalancedformula}.  This section is essentially an abbreviated version of~\cite[Sec.~2]{Reichardt09unbalancedformula}.  

For a natural number $n$, let $[n] = \{1, 2, \ldots, n\}$.  For a finite set $X$, let $\C^X$ be the inner product space $\C^{\abs X}$ with orthonormal basis $\{ \ket x : x \in X \}$.  For vector spaces $V$ and $W$ over $\C$, let $\L(V, W)$ be the set of linear transformations from $V$ into $W$, and let $\L(V) = \L(V, V)$.  For $A \in \L(V, W)$, $\norm{A}$ is its operator norm.  Let $\B = \{0,1\}$.  For a string $x \in \B^n$, let $\bar x$ denote its bitwise complement.

\subsection{Span program full witness size} \label{s:spanprogramdef}

The full witness size is a span program complexity measure that is important for developing quantum algorithms that are time efficient as well as query efficient.  

\begin{definition}[Span program~\cite{KarchmerWigderson93span}] \label{t:spanprogramdef}
A span program $P$ consists of a natural number $n$, a finite-dimensional inner product space $V$ over $\C$, a ``target" vector $\ket t \in V$, disjoint sets $\Ifree$ and $I_{j,b}$ for $j \in [n]$, $b \in \B$, and ``input vectors" $\ket{v_i} \in V$ for $i \in \Ifree \cup \bigcup_{j \in [n], b \in \B} I_{j,b}$.  

To $P$ corresponds a function $f_P : \B^n \rightarrow \B$, defined on $x \in \B^n$ by 
\beq \label{e:spanprogramdef}
f_P(x) = \begin{cases}
1 & \text{if $\ket t \in \Span(\{ \ket{ v_i } : i \in \Ifree \cup \bigcup_{j \in [n]} I_{j, x_j} \})$} \\ 
0 & \text{otherwise}
\end{cases}
\eeq
\end{definition}

Some additional notation is convenient.  Fix a span program $P$.  Let $I = \Ifree \cup \bigcup_{j \in [n], b \in \B} I_{j,b}$.  Let $A \in \L(\C^I, V)$ be given by $A = \sum_{i \in I} \ketbra{v_i}{i}$.  For $x \in \B^n$, let $I(x) = \Ifree \cup \bigcup_{j \in [n]} I_{j, x_j}$ and $\Pi(x) = \sum_{i \in I(x)} \ketbra i i \in \L(\C^I)$.  Then $f_P(x) = 1$ if $\ket t \in \Range(A \Pi(x))$.  A vector $\ket w \in \C^I$ is said to be a witness for $f_P(x) = 1$ if $\Pi(x) \ket w = \ket w$ and $A \ket w = \ket t$.  A vector $\ket{w'} \in V$ is said to be a witness for $f_P(x) = 0$ if $\braket t {w'} = 1$ and $\Pi(x) A^\dagger \ket{w'} = 0$.  

\begin{definition}[Witness size] \label{t:wsizedef}
Consider a span program $P$, and a vector $s \in [0, \infty)^n$ of nonnegative ``costs."  Let $S = \sum_{j \in [n], b \in \B, i \in I_{j,b}} \sqrt{s_j} \ketbra i i \in \L(\C^I)$.  For each input $x \in \B^n$, define the witness size of $P$ on $x$ with costs $s$, $\wsizexS P x s$, as follows: 
\beq \label{e:wsizedef}
\wsizexS P x s = \begin{cases}
\min_{\ket w : \, A \Pi(x) \ket w = \ket t} \norm{S \ket w}^2 & \text{if $f_P(x) = 1$} \\
\min_{\substack{\ket{w'} : \, \braket{t}{w'} = 1 \\ \Pi(x) A^\adjoint \ket{w'} = 0}} \norm{S A^\adjoint \ket{w'}}{}^2 & \text{if $f_P(x) = 0$}
\end{cases}
\eeq

The witness size of $P$ with costs $s$ is 
\beq
\wsizeS P s =  \max_{x \in \B^n} \wsizexS P x s
 \enspace .
\eeq

Define the full witness size $\wsizefS P s$ by letting $S^f = S + \sum_{i \in \Ifree} \ketbra i i$ and 
\begin{align}
\wsizefxS P x s &= \begin{cases}
\min_{\ket w : \, A \Pi(x) \ket w = \ket t} (1 + \norm{S^f \ket w}{}^2) & \text{if $f_P(x) = 1$} \\
\min_{\substack{\ket{w'} : \, \braket{t}{w'} = 1 \\ \Pi(x) A^\adjoint \ket{w'} = 0}} (\norm{\ket{w'}}{}^2 + \norm{S A^\adjoint \ket{w'}}{}^2) & \text{if $f_P(x) = 0$}
\end{cases} \\
\wsizefS P s &=  \max_{x \in \B^n} \wsizefxS P x s
 \enspace .
\end{align}
\end{definition}

When the subscript $s$ is omitted, the costs are taken to be uniform, $s = \vec 1 = (1, 1, \ldots, 1)$, e.g., $\wsizef P = \wsizefS P {\vec 1}$.  The witness size is defined in~\cite{ReichardtSpalek08spanprogram}.  The full witness size is defined in~\cite[Sec.~8]{Reichardt09spanprogram}, but is not named there.  A \emph{strict} span program has $\Ifree = \emptyset$, so $S^f = S$, and a \emph{monotone} span program has $I_{j,0} = \emptyset$ for all $j$~\cite[Def.~4.9]{Reichardt09spanprogram}.  

\begin{theorem}[{\cite[Theorem~9.3]{Reichardt09spanprogram}, \cite[Theorem~2.3]{Reichardt09unbalancedformula}}] \label{t:generalspanprogramalgorithmnonblackbox}
Let $P$ be a span program.  Then $f_P$ can be evaluated using 
\beq \label{e:generalspanprogramalgorithmnonblackbox}
T = O\big( \wsizef P \, \norm{\abst(A_{G_P})} \big)
\eeq
quantum queries, with error probability at most $1/3$.  Moreover, if the maximum degree of a vertex in $G_P$ is $d$, then the time complexity of the algorithm for evaluating $f_P$ is at most a factor of $(\log d) \big(\log (T \log d)\big)^{O(1)}$ worse, after classical preprocessing and assuming constant-time coherent access to the preprocessed string.  
\end{theorem}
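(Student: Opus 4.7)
The plan is to realize $P$ as a weighted bipartite graph $G_P$ whose adjacency kernel encodes $f_P$, run Szegedy's discrete-time quantum walk on the input-modified graph $G_P(x)$, then use phase estimation at the correct precision (followed by amplitude amplification) to detect the kernel. The proof reassembles pieces already developed in \cite{Reichardt09spanprogram}; the role of the \emph{full} witness size, rather than the plain witness size, is to force the detection amplitude on the root to be large enough for the time-efficient regime.

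First I would build $G_P$ with a root vertex plus input vertices $\{i \in I\}$ on one side, basis vectors of $V$ on the other, and biadjacency matrix whose columns are the $\ket{v_i}$ together with an extra $-\ket t$ for the root. Passing to $G_P(x)$ by attaching dangling edges to vertices $i \in I\setminus I(x)$, \defref{t:spanprogramdef} gives immediately that $A_{G_P(x)}$ has a $0$-eigenvector with nonzero coordinate on the root iff $f_P(x)=1$, and a $0$-eigenvector with nonzero coordinate on $\ket t$ iff $f_P(x)=0$. In either case the witnesses of \defref{t:wsizedef} \emph{are} the kernel vectors, up to rescaling of the dangling edges by $\sqrt{s_j}$.

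Next, I would quantify this correspondence using the ``squared inverse root-support equals full witness size'' principle of \cite[Thm.~8.7]{Reichardt09spanprogram}. The lifted kernel vector corresponding to a witness $\ket w$ for $f_P(x)=1$ has squared norm $1 + \norm{S^f \ket w}^2$ and root coordinate $1$; minimizing yields normalized squared root-support $\Omega(1/\wsizefxS P x s)$, and the $f_P(x)=0$ side is symmetric via the extra $\norm{\ket{w'}}^2$ term. Then I would run phase estimation on Szegedy's quantization of the walk on $G_P(x)$, initialized at the root. Its $\pm 1$ eigenspaces correspond to $\Kernel(A_{G_P(x)})$ with phase gap $\Omega(1/\norm{\abst(A_{G_P})})$; at precision $\Theta(1/\norm{\abst(A_{G_P})})$ each run uses $O(\norm{\abst(A_{G_P})})$ walk steps and succeeds with probability $\Omega(1/\wsizef P)$, so $O(\sqrt{\wsizef P})$ rounds of amplitude amplification give constant error with $O(\wsizef P \cdot \norm{\abst(A_{G_P})})$ input queries overall.

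For the time-complexity addendum, a walk step reduces, at each vertex, to a reflection through a subspace of dimension at most $d$ spanned by weighted edge states, which from preprocessed coherent access to the edge-weight table is implementable in $(\log d)\bigl(\log(T \log d)\bigr)^{O(1)}$ elementary gates via a balanced binary-tree state-preparation subroutine; phase estimation and amplification then add only $\poly(\log T)$ overhead. The main obstacle, papered over in the cited theorems, is the robustness of phase estimation against ``near-kernel'' eigenvectors of the walk that sit within the $O(1/\norm{\abst(A_{G_P})})$ spectral gap: those must not leak significant amplitude onto the root. This is exactly what the full-witness-size accounting ensures, by charging the algorithm for the norm of $\ket w$ itself (including the free-input components $\ketbra i i$ with $i \in \Ifree$) rather than only for the query-weighted part $\norm{S\ket w}$, which is the one subtlety that distinguishes the time-efficient claim from the pure query bound.
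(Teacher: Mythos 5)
The theorem is cited verbatim from \cite[Theorem~9.3]{Reichardt09spanprogram} and \cite[Theorem~2.3]{Reichardt09unbalancedformula}; the paper at hand does not reprove it, only applies it. Your reconstruction has the right skeleton --- $G_P$ with biadjacency columns $\ket t$ and $\{\ket{v_i}\}$; dangling edges giving $G_P(x)$; kernel vectors of $A_{G_P(x)}$ identified with the witnesses of \defref{t:wsizedef}; normalized root-support $\Omega(1/\wsizef P)$; phase estimation on a Szegedy-type walk seeded at the root --- and you correctly identify why the \emph{full} witness size, rather than $\wsize P$, governs the detectable amplitude.

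There is, however, a genuine gap in the query accounting. Phase estimation at precision $\Theta(1/\norm{\abst(A_{G_P})})$ takes $O(\norm{\abst(A_{G_P})})$ walk steps; wrapping this in $O(\sqrt{\wsizef P})$ rounds of amplitude amplification gives $O\bigl(\sqrt{\wsizef P}\,\norm{\abst(A_{G_P})}\bigr)$ queries, not the stated $O\bigl(\wsizef P\,\norm{\abst(A_{G_P})}\bigr)$ --- your own estimates contradict the bound you conclude. The deeper problem is the premise that the walk has a uniform phase gap $\Omega(1/\norm{\abst(A_{G_P})})$ around $\pm 1$. It does not: the spectrum of $A_{G_P(x)}$ near $0$ can be arbitrarily dense, and the absence of such a gap is precisely what makes the analysis nontrivial. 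The algorithm of \cite[Sec.~9]{Reichardt09spanprogram} runs phase estimation \emph{once}, at the finer precision $\Theta\bigl(1/(\wsizef P\,\norm{\abst(A_{G_P})})\bigr)$, and separates the two cases not by a spectral gap but by the effective spectral gap lemma, which bounds the initial state's squared overlap with the span of small-phase eigenvectors in terms of the witness size; at the above precision this overlap falls below the $\Omega(1/\wsizef P)$ detection threshold for the true case. The $\wsizef P$ factor enters through the required precision (hence the number of walk steps), not through amplification. Your closing intuition about ``near-kernel'' leakage is pointing at the right phenomenon, but the mechanism is the effective-spectral-gap bound, not a fixed gap that full witness size merely certifies. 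The time-complexity sketch via binary-tree state preparation for the $\leq d$-dimensional edge reflections is fine.
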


\subsection{Direct-sum span program composition}

Let $f : \B^n \rightarrow \B$ be a boolean function.  Let $S \subseteq [n]$.  For $j \in [n]$, let $m_j$ be a natural number, with $m_j = 1$ for $j \notin S$.  For $j \in S$, let $f_j : \B^{m_j} \rightarrow \B$.  Define $y : \B^{m_1} \times \cdots \times \B^{m_n} \rightarrow \B^n$ by 
\beq
y(x)_j = \begin{cases} f_j(x_j) & \text{if $j \in S$} \\ x_j & \text{if $j \notin S$} \end{cases}
\eeq  
Define $g : \B^{m_1} \times \cdots \times \B^{m_n} \rightarrow \B$ by $g(x) = f(y(x))$.  Given span programs for the individual functions $f$ and $f_j$ for $j \in S$, we will construct a span program for $g$.  

Let $P$ be a span program computing $f_P = f$.  Let $P$ have inner product space $V$, target vector $\ket t$ and input vectors $\ket{v_i}$ indexed by $\Ifree$ and $I_{jc}$ for $j \in [n]$ and $c \in \B$.  

For $j \in [n]$, let $s_j \in [0, \infty)^{m_j}$ be a vector of costs, and let $s \in [0, \infty)^{\sum m_j}$ be the concatenation of the vectors $s_j$.  For $j \in S$, let $P^{j0}$ and $P^{j1}$ be span programs computing $f_{P^{j1}} = f_j : \B^{m_j} \rightarrow \B$ and $f_{P^{j0}} = \neg f_j$, with $r_j = \wsizeS{P^{j0}}{s_j} = \wsizeS{P^{j1}}{s_j}$.  For $c \in \B$, let $P^{jc}$ have inner product space $V^{jc}$ with target vector $\ket{t^{jc}}$ and input vectors indexed by $\Ifree^{jc}$ and $I^{jc}_{kb}$ for $k \in [m_j]$, $b \in \B$.  For $j \notin S$, let $r_j = s_j$.  

Let $I_S = \bigcup_{j \in S, c \in \B} I_{jc}$.  Define $\jc : I_S \rightarrow [n] \times \B$ by $\jc(i) = (j,c)$ if $i \in I_{jc}$.  The idea is that $\jc$ maps $i$ to the input span program that must evaluate to true in order for $\ket{v_i}$ to be available for $P$.  

\begin{definition}[{\cite[Def.~4.5]{Reichardt09spanprogram}}] \label{t:directsumcomposedef}
The direct-sum-composed span program $Q^\oplus$ is defined by: 
\begin{itemize}
\item
The inner product space is $V^\oplus = V \oplus \bigoplus_{j \in S, c \in \B} (\C^{I_{jc}} \otimes V^{jc})$.  Any vector in $V^\oplus$ can be uniquely expressed as $\ket{u}_V + \sum_{i \in I_S} \ket{i} \otimes \ket{u_i}$, where $\ket u \in V$ and $\ket{u_i} \in V^{\jc(i)}$.  
\item
The target vector is $\ket{t^\oplus} = \ket{t}_V$.  
\item
The free input vectors are indexed by $\Ifree^\oplus = \Ifree \cup I_S \cup \bigcup_{j \in S, c \in \B} (I_{jc} \times \Ifree^{jc})$ with, for $i \in \Ifree^\oplus$, 
\beq
\ket{v^\oplus_i} = \begin{cases}
\ket{v_i}_V & \text{if $i \in \Ifree$} \\
\ket{v_i}_V + \ket i \otimes \ket{t^{jc}} & \text{if $i \in I_{jc}$ and $j \in S$} \\		
\ket{i'} \otimes \ket{v_{i''}} & \text{if $i = (i', i'') \in I_{jc} \times \Ifree^{jc}$}
\end{cases}
\eeq
\item
The other input vectors are indexed by $I^\oplus_{(jk)b}$ for $j \in [n]$, $k \in [m_j]$, $b \in \B$.  For $j \notin S$, $I^\oplus_{(j1)b} = I_{jb}$, with $\ket{v^\oplus_i} = \ket{v_i}_V$ for $i \in I^\oplus_{(j1)b}$.  For $j \in S$, let $I^\oplus_{(jk)b} = \bigcup_{c \in \B} (I_{jc} \times I^{jc}_{kb})$.  For $i \in I_{jc}$ and $i' \in I^{jc}_{kb}$, let 
\beq
\ket{v^\oplus_{ii'}} = \ket{i} \otimes \ket{v_{i'}}
 \enspace .
\eeq
\end{itemize}
\end{definition}

By~\cite[Theorem~4.3]{Reichardt09spanprogram}, $f_{Q^\oplus} = g$ and $\wsizeS {Q^\oplus} s \leq \wsizeS P r$.  Moreover, we can bound how quickly the full witness size can grow relative to the witness size: 

\begin{lemma}[{\cite[Lemma~2.5]{Reichardt09unbalancedformula}}] \label{t:wsizefcompose}
Under the above conditions, for each input $x \in \B^{m_1} \times \cdots \times \B^{m_n}$, with $y = y(x)$, 
\begin{itemize}
\item 
If $g(x) = 1$, let $\ket w$ be a witness to $f_P(y) = 1$ with $\sum_{j \in [n], i \in I_{j y_j}} r_j \abs{w_i}^2 = \wsizexS P y r$.  Then 
\beq\begin{split} \label{e:wsizefcomposetrue}
\frac{ \wsizefxS {Q^\oplus} x s }{ \wsizexS P y r }
&\leq 
\sigma\big(y, \ket w\big) + \frac{1 + \sum_{i \in \Ifree} \abs{w_i}^2}{\wsizexS P y r} \\
&\text{where $\sigma(y, \ket w) = \max_{\substack{j \in S : \\ \text{$\exists i \in I_{j y_j}$ with $\braket{i}{w} \neq 0$}}} \frac{\wsizefS {P^{j y_j}} {s_j}}{\wsizeS {P^{j y_j}} {s_j}}$}
 \enspace .
\end{split}\eeq
\item 
If $g(x) = 0$, let $\ket{w'}$ be a witness to $f_P(y) = 0$ such that $\sum_{j \in [n], i \in I_{j \bar y_j}} r_j \abs{\braket{w'}{v_i}}^2 = \wsizexS P y r$.  Then 
\beq\begin{split} \label{e:wsizefcomposefalse}
\frac{ \wsizefxS {Q^\oplus} x s }{ \wsizexS P y r }
&\leq 
\sigma(\bar y, \ket{w'}) + \frac{\norm{\ket{w'}}^2}{\wsizexS P y r} \\
&\text{where $\sigma(\bar y, \ket{w'}) = \max_{\substack{j \in S : \\ \text{$\exists i \in I_{j \bar y_j}$ with $\braket{v_i}{w'} \neq 0$}}} \frac{\wsizefS {P^{j \bar y_j}} {s_j}}{\wsizeS {P^{j \bar y_j}} {s_j}}$}
 \enspace .
\end{split}\eeq
\end{itemize}
If $S = \emptyset$, then $\sigma(y, \ket w)$ and $\sigma(\bar y, \ket{w'})$ should each be taken to be $1$ in the above equations.  
\end{lemma}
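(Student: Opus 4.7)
The plan is to construct explicit witnesses for $Q^\oplus$ from the given witness for $P$ and from optimal witnesses for the inner span programs $P^{jc}$, and then to bound the full witness size of $Q^\oplus$ term by term using the structure spelled out in \defref{t:directsumcomposedef}.

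\emph{True case.} Suppose $g(x) = 1$, and fix $\ket w$ a witness for $f_P(y) = 1$ as in the hypothesis. For each $j \in S$ with $\braket{i}{w} \neq 0$ for some $i \in I_{j y_j}$, let $\ket{w^{j y_j}}$ be a witness for $f_{P^{j y_j}}(x_j) = 1$ chosen to achieve the \emph{full} witness size $\wsizefxS{P^{jy_j}}{x_j}{s_j}$. Define $\ket{w^\oplus} \in \C^{I^\oplus}$ by copying $w_i$ onto each index $i \in \Ifree \cup \bigcup_{j \notin S} I_{j y_j} \cup \bigcup_{j \in S} I_{j y_j}$ (noting that for $j \in S$ the index $i \in I_{jy_j}$ lies in $\Ifree^\oplus$), and setting $w^\oplus_{(i,i')} = -w_i\,(w^{jy_j})_{i'}$ for $i \in I_{jy_j}$, $j \in S$, and $i'$ ranging over $\Ifree^{jy_j} \cup \bigcup_k I^{jy_j}_{k\,x_{j,k}}$. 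Using $A^{jy_j}\Pi^{jy_j}(x_j)\ket{w^{jy_j}} = \ket{t^{jy_j}}$, the unwanted terms $w_i\ket{i}\otimes\ket{t^{jy_j}}$ produced by the free vectors $\ket{v_i}_V + \ket{i}\otimes\ket{t^{jy_j}}$ are exactly cancelled by the contribution of the $(i,i')$ vectors, so $A^\oplus \Pi^\oplus(x)\ket{w^\oplus} = \ket{t^\oplus}$.

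\emph{False case.} Given $\ket{w'}$ a witness for $f_P(y) = 0$, one has $\braket{v_i}{w'} = 0$ for every $i \in \bigcup_j I_{jy_j}$. Define
\begin{equation*}
\ket{w'^{\oplus}} \;=\; \ket{w'}_V \;-\; \sum_{j \in S}\sum_{i \in I_{j \bar y_j}} \braket{v_i}{w'}\,\ket{i}\otimes\ket{w'^{j \bar y_j}},
\end{equation*}
where $\ket{w'^{j\bar y_j}}$ is a witness for $f_{P^{j\bar y_j}}(x_j) = 0$ optimizing the full witness size (this exists because $f_{P^{j\bar y_j}}(x_j) = 0$ in either case $P^{j1}$ or $P^{j0}$). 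The condition $\braket{t^\oplus}{w'^\oplus} = 1$ is immediate, and checking $(A^\oplus)^\dagger\ket{w'^\oplus}$ against each input vector in $\Ifree^\oplus \cup \bigcup I^\oplus_{(j,k)x_{j,k}}$ using \defref{t:directsumcomposedef} reduces precisely to the witness conditions for $\ket{w'}$ in $P$ and for $\ket{w'^{j\bar y_j}}$ in $P^{j\bar y_j}$.

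\emph{Size bookkeeping.} With $\ket{w^\oplus}$ (resp.\ $\ket{w'^\oplus}$) in hand, I directly expand $1 + \norm{S^f\ket{w^\oplus}}^2$ (resp.\ $\norm{\ket{w'^\oplus}}^2 + \norm{SA^{\oplus\dagger}\ket{w'^\oplus}}^2$) according to the partition of $I^\oplus$. The contributions from $i \in \Ifree$ and from $j \notin S$ reproduce $\sum_{\Ifree}|w_i|^2$ and $\sum_{j \notin S}r_j|w_i|^2$ (using $s_j = r_j$ there), while each $i \in I_{jy_j}$ with $j \in S$ contributes $|w_i|^2$ (since it is now free) plus $|w_i|^2\,\norm{S^{f,jy_j}\ket{w^{jy_j}}}^2$, which combine to $|w_i|^2\,\wsizefxS{P^{jy_j}}{x_j}{s_j}$. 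The false case is analogous, each $i \in I_{j\bar y_j}$ with $j \in S$ contributing $|\braket{v_i}{w'}|^2\,\wsizefxS{P^{j\bar y_j}}{x_j}{s_j}$. Bounding $\wsizefxS{P^{jc}}{x_j}{s_j} \le \wsizefS{P^{jc}}{s_j} = (\wsizefS{P^{jc}}{s_j}/\wsizeS{P^{jc}}{s_j})\,r_j \le \sigma\,r_j$ in each relevant term and dividing through by $\wsizexS{P}{y}{r}$ gives precisely \eqnref{e:wsizefcomposetrue} and \eqnref{e:wsizefcomposefalse}.

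The main obstacle is the bookkeeping step: one has to organize $I^\oplus$ into the right groups, distinguish free versus weighted indices (which changes between $P$, the inner programs, and $Q^\oplus$ because $I_S$ becomes free in $\Ifree^\oplus$), and correctly use the \emph{full} optimal inner witnesses $\ket{w^{jy_j}}$ or $\ket{w'^{j\bar y_j}}$ rather than ones minimizing the plain witness size. The edge case $S = \emptyset$ trivializes: the second term above is a pure copy of $\ket w$ (resp.\ $\ket{w'}$), no cancellation is needed, and setting $\sigma = 1$ absorbs the $r_j|w_i|^2$ sum into a single $\wsizexS{P}{y}{r}$ on the right-hand side.
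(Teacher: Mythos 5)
Your construction is correct and matches the paper's cited argument (which comes from \cite[Lemma~2.5]{Reichardt09unbalancedformula}): build the $Q^\oplus$ witness by stitching the $P$-witness together with optimally chosen \emph{full}-witness-size witnesses for the relevant inner programs $P^{j y_j}$ or $P^{j\bar y_j}$, verify cancellation/orthogonality against \defref{t:directsumcomposedef}, and expand $1 + \norm{S^f\ket{w^\oplus}}^2$ (resp.\ $\norm{\ket{w'^\oplus}}^2 + \norm{S A^{\oplus\dagger}\ket{w'^\oplus}}^2$) over the partition of $I^\oplus$. The only unstated step in your bookkeeping is that after replacing the $j \in S$ contributions by $\sigma\, r_j$, absorbing the $j \notin S$ contributions $r_j\abs{w_i}^2$ into $\sigma\,\wsizexS{P}{y}{r}$ uses $\sigma \geq 1$, which holds since $\wsizefop \geq \wsizeop$ always; this is implicit in the paper's convention of setting $\sigma = 1$ when $S = \emptyset$.
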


\begin{lemma}[{\cite[Lemma~3.4]{Reichardt09unbalancedformula}}] \label{t:directsumnorm}
If $P_\varphi$ is the direct-sum composition along a formula $\varphi$ of span programs $P_v$ and $P_v^\dagger$, then 
\beq
\norm{\abst(A_{G_P})} \leq 2 \max_{v \in \varphi} \max \{ \norm{\abst(A_{G_{P_v}})}, \norm{\abst(A_{G_{P_v^\dagger}})} \}
 \enspace .
\eeq
If the span programs $P_v$ are monotone, then $\norm{\abst(A_{G_P})} \leq 2 \max_v \norm{\abst(A_{G_{P_v}})}$.  
\end{lemma}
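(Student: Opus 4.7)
The plan is to decompose the biadjacency matrix $\abst(A_{G_{P_\varphi}})$ into a block-diagonal contribution from the individual gate subgraphs plus a ``glue'' term of norm at most one, and then apply the triangle inequality.

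First I would unpack \defref{t:directsumcomposedef} applied recursively along $\varphi$. Every gate $v \in \varphi$ carries its own inner-product-space summand (embedded in $V^\oplus$) and its own input index set, so both the rows (basis vectors of $V^\oplus$) and the columns (indices in $I^\oplus$) of $\abst(A_{G_{P_\varphi}})$ partition naturally among the gates of $\varphi$, each row or column being owned by exactly one gate. The only cross-gate contribution comes from a modified interface input: for a parent gate $v$ with interface index $i \in I_{jc}$ and child gate $u$, the input vector $\ket{v_i}_V + \ket{i} \otimes \ket{t^{jc}}$ contributes its ordinary entries $\ket{v_i}_V$ to the parent's rows and a single additional unit entry to the ``glue'' row $\ket{i} \otimes \ket{t^{jc}}$ in the child.

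This yields the decomposition
\beq
\abst(A_{G_{P_\varphi}}) \;=\; \bigoplus_{v \in \varphi} M_v \;+\; G \enspace ,
\eeq
where each $M_v \in \{\abst(A_{G_{P_v}}),\, \abst(A_{G_{P_v^\dagger}})\}$ according to the polarity of $v$'s appearance in the composition, and $G$ has exactly one unit entry per parent--child interface, each in a distinct row and a distinct column. Hence $G$ is a submatrix of a permutation matrix and $\|G\| \leq 1$. Since any nontrivial gate satisfies $\max\{\|\abst(A_{G_{P_v}})\|, \|\abst(A_{G_{P_v^\dagger}})\|\} \geq 1$, the triangle inequality gives
\beq
\|\abst(A_{G_{P_\varphi}})\| \;\leq\; \max_{v} \max\{\|\abst(A_{G_{P_v}})\|, \|\abst(A_{G_{P_v^\dagger}})\|\} + 1 \;\leq\; 2 \max_{v \in \varphi} \max\{\|\abst(A_{G_{P_v}})\|, \|\abst(A_{G_{P_v^\dagger}})\|\} \enspace .
\eeq
For monotone $P_v$, the direct-sum composition along $\varphi$ never invokes $P_v^\dagger$ (since $I_{j,0} = \emptyset$ leaves no 0-side interfaces on which to glue a dual), so the inner max collapses to $\|\abst(A_{G_{P_v}})\|$.

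The main obstacle will be the partitioning step: I must verify that distinct gates $u \neq v$ contribute to orthogonal row and column subspaces of the composed matrix. This reduces to observing that each tensor factor $\ket{i} \otimes V^{jc}$ in \defref{t:directsumcomposedef} introduces a fresh copy of the subprogram's internal space indexed by the interface $i$, so iterating the composition never identifies the internal spaces of two different gates. Once this block-diagonal structure is in place, bounding $\|G\|$ is immediate and the triangle inequality finishes the argument.
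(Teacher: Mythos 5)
Your overall plan — split $\abst(A_{G_{P_\varphi}})$ into a block-diagonal piece, one block per gate, plus a cross-gate ``glue'' term, and finish with the triangle inequality — is the right one, and the partitioning step you flag at the end is indeed correct: \defref{t:directsumcomposedef} introduces a fresh copy $\{i\} \otimes V^{jc}$ of the child's inner product space for each interface index, so distinct gates own disjoint sets of rows, and each column belongs to one gate (the root's free and non-interface columns, plus the interface columns to its children, belong to the root; and so on recursively).

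The gap is in your treatment of the glue. You write that the interface input vector $\ket{v_i}_V + \ket{i} \otimes \ket{t^{jc}}$ contributes ``a single additional unit entry to the glue row $\ket{i} \otimes \ket{t^{jc}}$,'' and conclude that $G$ is a submatrix of a permutation matrix with $\|G\| \le 1$. But the rows of $A_{G_P}$ corresponding to the child's copy of $V^{jc}$ are indexed by the \emph{standard basis} of $V^{jc}$, not by any basis containing $\ket{t^{jc}}$. So the child-row part of the interface column is the entire target vector $\ket{t^{jc}}$, which typically has several nonzero entries and norm exceeding $1$; for instance, for $P_{\AND}$ the target is $(\alpha_1, \alpha_2)$ with $\|\ket t\| = \alpha \in (1, 2^{1/4}]$. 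You can see this concretely in \figref{f:exampletensorproduct}(c): the interface column labeled $c$ carries the two child-row entries $\alpha_1'' \delta''$ and $\alpha_2'' \delta''$, not a single $1$. So $G$ is not a submatrix of a permutation matrix, and $\|G\| \le 1$ is false in general.

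The fix is easy and makes the argument cleaner. The glue $G$ is block-diagonal with one block per non-root gate $v$, each block being a single column equal to $\abst(\ket{t_v})$ sitting in $v$'s rows. Hence $\|G\| = \max_{v \ne \text{root}} \|\ket{t_v}\|$, and since $\ket{t_v}$ is a column of $\biadj_{G_{P_v}} = (\ket{t_v} \; A_v)$, this is at most $\max_v \norm{\abst(A_{G_{P_v}})}$ (and symmetrically at most $\max_v \norm{\abst(A_{G_{P_v^\dagger}})}$ when the dual appears). Similarly, each block of the block-diagonal part is a submatrix of the corresponding gate's biadjacency matrix (the target column is dropped for non-root gates), so its norm is also bounded by the same maximum. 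The triangle inequality then gives the factor of $2$ directly, without needing the auxiliary observation that every nontrivial gate has norm at least $1$. Your reasoning for the monotone refinement is fine.
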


\section{Evaluation of arbitrary AND-OR formulas}

Let $\varphi$ be an AND-OR formula.  In this section, we will prove \thmref{t:andor} by applying \thmref{t:generalspanprogramalgorithmnonblackbox} to a certain composed span program $P_\varphi$.  The construction of $P_\varphi$ has three steps that we will explain in detail below.  
\begin{enumerate}
\item
Eliminate any AND or OR gates with fan-in one, and expand out AND and OR gates with higher fan-ins into gates of fan-in exactly two.   
\item
Mark a certain subset of the edges of the formula.  We call marked edges ``checkpoints."  
\item
Starting with the span programs for $\AND_2$ and $\OR_2$, compose them for the subformulas cut off by checkpointed edges using a version of tensor-product composition.  Compose the resulting span programs across checkpoints using direct-sum composition to yield $P_\varphi$.  
\end{enumerate}

Direct-sum span program composition keeps the norm of the corresponding graph's adjacency matrix under control, as well as the maximum degree of a vertex in the graph.  However, it makes the full witness size grow much larger than the witness size, especially for highly unbalanced formulas.  Reduced tensor-product composition generates strict span programs, for which the full witness size stays close to the witness size.  However, it allows the norm and the maximum degree of the corresponding graph to grow exponentially quickly.  By using both techniques in careful combination, we are able to manage this tradeoff so that \thmref{t:generalspanprogramalgorithmnonblackbox} can be applied.

\secref{s:andorspanprogramdef} presents the span programs we use for fan-in-two AND and OR gates.  

In \secref{s:reducedtensorproductANDORcompose}, we study reduced tensor-product composition for the span programs for AND and OR gates.  Reduced tensor-product composition is a version of tensor-product composition that parsimoniously uses fewer dimensions when possible.  For AND-OR formulas, it has the advantage that the output span program's inner product space bears a close connection to false inputs of the formula $\varphi$, similarly to canonical span programs.  In order to motivate the checkpointing idea, we explain the problems of a scheme based only on reduced tensor-product composition.  

\secref{s:hybridconstruction} then presents our full construction of $P_\varphi$, based on a combination of direct-sum and reduced tensor-product composition.   

\secref{s:andoranalysis} contains the analysis of the graphs $G_{P_\varphi}(x)$ required to apply \thmref{t:generalspanprogramalgorithmnonblackbox}.

\subsection{Span programs for AND$_2$ and OR$_2$} \label{s:andorspanprogramdef}

We will use the following strict, monotone span programs for fan-in-two AND and OR gates: 

\begin{definition}[{\cite[Def.~4.1]{Reichardt09unbalancedformula}}] \label{t:andorspanprogramdef}
For $s_1, s_2 > 0$, define span programs $P_{\AND}(s_1, s_2)$ and $P_{\OR}(s_1, s_2)$ computing $\AND_2$ and $\OR_2$, $\B^2 \rightarrow \B$, respectively, by 
\begin{align}
P_{\AND}(s_1, s_2): &&
\ket t &= \left( \begin{matrix} \alpha_1 \\ \alpha_2 \end{matrix} \right) ,\; 
&\ket{v_1} &= \left( \begin{matrix} \beta_1 \\ 0 \end{matrix} \right) ,\; &\ket{v_2} &= \left( \begin{matrix} 0 \\ \beta_2 \end{matrix} \right) \\
P_{\OR}(s_1, s_2): &&
\ket t &= \delta ,\; &\ket{v_1} &= \epsilon_1 ,\;& \ket{v_2} &= \epsilon_2 
\end{align}
Both span programs have $I_{1,1} = \{1\}$, $I_{2,1} = \{2\}$ and $\Ifree = I_{1,0} = I_{2,0} = \emptyset$.  Here the parameters $\alpha_j, \beta_j, \delta, \epsilon_j$, for $j \in [2]$, are given by 
\begin{align}
\alpha_j &= (s_j / s_p)^{1/4} & \beta_j &= 1 \\
\delta &= 1 & \epsilon_j &= (s_j / s_p)^{1/4}
 \enspace ,
\end{align}
where $s_p = s_1 + s_2$.  
Let $\alpha = \sqrt{\alpha_1^2 + \alpha_2^2}$ and $\epsilon = \sqrt{\epsilon_1^2 + \epsilon_2^2}$.  
\end{definition}

Note that $\alpha, \epsilon \in (1, 2^{1/4}]$.  They are largest when $s_1 = s_2$.  

\begin{claim}[{\cite[Claim~4.2]{Reichardt09unbalancedformula}}] \label{t:andorspanprogramwsize}
The span programs $P_{\AND}(s_1, s_2)$ and $P_{\OR}(s_1, s_2)$ satisfy: 
\begin{align}\begin{split}
\wsizexS{P_{\AND}}{x}{(\sqrt{s_1}, \sqrt{s_2})}
&=
\begin{cases}
\sqrt{s_p} & \text{if $x \in \{11, 10, 01\}$} \\
\frac{\sqrt{s_p}}{2} & \text{if $x = 00$}
\end{cases} \\
\wsizexS{P_{\OR}}{x}{(\sqrt{s_1}, \sqrt{s_2})}
&=
\begin{cases}
\sqrt{s_p} & \text{if $x \in \{00, 10, 01\}$} \\
\frac{\sqrt{s_p}}{2} & \text{if $x = 11$}
\end{cases}
\end{split}\end{align}
\end{claim}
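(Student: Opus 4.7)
The plan is to unwind Definition~\ref{t:wsizedef} case-by-case: two gates ($\AND_2$ and $\OR_2$) times four inputs $x \in \B^2$ gives eight witness-size computations. For each, I would first determine $f_P(x)$ to select the branch of \eqnref{e:wsizedef}, then either solve the (forced) linear witness equation or minimize the resulting quadratic form under a single linear constraint. Throughout, the parameters satisfy the useful identities $\alpha_j^2 = \epsilon_j^2 = \sqrt{s_j/s_p}$, so $\sqrt{s_j}\alpha_j^2 = \sqrt{s_j}\epsilon_j^2 = s_j/\sqrt{s_p}$ and $\alpha_j^2/\sqrt{s_j} = \epsilon_j^2/\sqrt{s_j} = 1/\sqrt{s_p}$ are independent of $j$; these identities drive every cancellation below.

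The four asymmetric cases $x \in \{10,01\}$ are essentially forced, since exactly one input vector is available. For $P_\OR$ on $x=10$, the unique true witness has $w_1 = 1/\epsilon_1$, $w_2 = 0$, giving $\norm{S\ket w}^2 = \sqrt{s_1}/\epsilon_1^2 = \sqrt{s_p}$. For $P_\AND$ on $x=10$, a false witness $\ket{w'}$ must satisfy $\braket{v_1}{w'}=0$ and $\braket{t}{w'}=1$, forcing $\ket{w'} = (1/\alpha_2)\ket 2$, and the analogous cancellation again yields $\sqrt{s_p}$. The cases $x=01$ are symmetric. The two ``corner'' cases $P_\AND$ on $11$ and $P_\OR$ on $00$ are equally direct: in the first, linear independence of $\ket{v_1},\ket{v_2}$ forces the true witness to be $\ket w = \alpha_1\ket 1 + \alpha_2\ket 2$; in the second, $\Pi(x)=0$ imposes no orthogonality and the scalar target forces $\ket{w'} = 1$, so $A^\adjoint\ket{w'} = \epsilon_1\ket 1 + \epsilon_2\ket 2$. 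In both, summing $s_j/\sqrt{s_p}$ over $j$ gives $\sqrt{s_p}$.

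The two genuinely ``balanced'' cases requiring optimization are $P_\OR$ on $x=11$ and $P_\AND$ on $x=00$. For $P_\OR$ on $11$, I would minimize $\sqrt{s_1}|w_1|^2 + \sqrt{s_2}|w_2|^2$ subject to $\epsilon_1 w_1 + \epsilon_2 w_2 = 1$; a standard Lagrange-multiplier computation gives optimum $1/\sum_j \epsilon_j^2/\sqrt{s_j} = 1/(2/\sqrt{s_p}) = \sqrt{s_p}/2$. The case $P_\AND$ on $00$ is parallel: $\Pi(x)=0$ removes the orthogonality condition, and minimizing $\sqrt{s_1}|w_1'|^2 + \sqrt{s_2}|w_2'|^2$ subject to $\alpha_1 w_1' + \alpha_2 w_2' = 1$ yields the same value $\sqrt{s_p}/2$ by the matching identity for $\alpha_j$. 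There is no real obstacle: the point of the choice $\alpha_j = \epsilon_j = (s_j/s_p)^{1/4}$ is precisely to make these optimizations collapse, so the entire claim reduces to eight short computations with a common structure.
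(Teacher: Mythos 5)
Your proof is correct. The paper itself does not prove this claim; it cites \cite[Claim~4.2]{Reichardt09unbalancedformula}, and the natural argument there (and really the only one available) is exactly the direct case-by-case verification from \defref{t:wsizedef} that you carry out. Your identities $\alpha_j^2 = \epsilon_j^2 = \sqrt{s_j/s_p}$, $\sqrt{s_j}\alpha_j^2 = s_j/\sqrt{s_p}$, and $\alpha_j^2/\sqrt{s_j} = 1/\sqrt{s_p}$ do correctly drive every cancellation, the six ``forced'' cases are handled right (in particular $A^\adjoint\ket{w'}$ picks up $\beta_j$ and $\epsilon_j$ factors in exactly the way you describe), and the two optimization cases reduce to the standard weighted Cauchy--Schwarz bound $\min\{\sum_j a_j|w_j|^2 : \sum_j b_j w_j = 1\} = 1/\sum_j b_j^2/a_j$, giving $\sqrt{s_p}/2$. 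One tiny phrasing nit: for, say, $P_\OR$ on $x=10$ the witness is not literally unique, since the minimization in \eqnref{e:wsizedef} is over all $\ket w$ with $A\Pi(x)\ket w = \ket t$, leaving $w_2$ free; but the minimizer of $\norm{S\ket w}^2$ sets $w_2 = 0$, which is what you use, so the computation is unaffected.
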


\subsection{Reduced tensor-product span program composition for AND-OR formulas} \label{s:reducedtensorproductANDORcompose}

Reduced tensor-product composition of span programs is introduced in~\cite[Def.~4.6]{Reichardt09spanprogram}.  We repeat the definition here, specializing to the case of monotone, strict span programs acting on disjoint inputs.  Also, for simplicity we consider the case of composing on one span program at a time.  After stating the definition, we specialize further to AND-OR formulas, and characterize the reduced tensor-product span program composition of the AND and OR span programs from \defref{t:andorspanprogramdef}.  

Consider monotone functions $f : \B \times \B^n \rightarrow \B$ and $f' : \B^m \rightarrow \B$.  Let $g : \B^m \times \B^n \rightarrow \B$ be defined by 
\beq
g(x, y) = f\big( f'(x), y \big)
\eeq
for $x \in \B^m$, $y \in \B^n$.  
Let $P$ be a span program computing $f_P = f$ and let $P'$ be a span program computing $f_{P'} = f'$.  Assume that $P$ and $P'$ are both monotone, strict span programs.  

Let span program $P$ be in inner product space $V = \C^{[d]}$, with target vector $\ket t$ and input vectors $\{\ket{v_i}\}$ indexed by $I_{j1}$ for $j \in [n]$.  Let $P'$ be in the inner product space $V'$ with target vector $\ket{t'}$ and input vectors $\{\ket{v'_{i'}}\}$ indexed by $I'_{k1}$ for $k \in [m]$.  Since $P$ and $P'$ are both monotone, $I_{j0} = I'_{k0} = \emptyset$.  

\begin{definition}[\cite{Reichardt09spanprogram}] \label{t:reducedtensorproductcomposedef}
The tensor-product-composed span program, reduced with respect to the basis $\{ \ket l : l \in [d] \}$ for $V$, is $Q^\rtensor$, defined by: 
\begin{itemize}
\item
Let $Z = \{ l \in [d] : \forall \, i \in I_{11}, \braket l {v_i} = 0 \}$.  For $l \in [d]$, define $V_l$ and $\ket{\pi_l} \in V_l$ by  
\beq\begin{split}
V_l &= \begin{cases}
V' & \text{if $l \notin Z$, i.e., $\braket l {v_i} \neq 0$ for some $i \in I_{11}$} \\
\C & \text{if $l \in Z$}
\end{cases} \\
\ket{\pi_l} &= \begin{cases}
\ket{t'} & \text{if $l \notin Z$} \\
\norm{\ket{t'}} & \text{if $l \in Z$} \\
\end{cases}
\end{split}\eeq
\item
The inner product space of $Q^\rtensor$ is $V^\rtensor = \bigoplus_{l \in [d]} V_l$.  Any vector $\ket v \in V^\rtensor$ can be uniquely expressed as $\sum_{l \in [d]} \ket{v_l}_{V_l}$, where $\ket{v_l} \in V_l$.  
\item
The target vector is 
\beq \label{e:reducedtensorproductcomposedeftarget}
\ket{t^\rtensor} = \sum_{l \in [d]} \braket l t \ket{\pi_l}_{V_l}
 \enspace .
\eeq
\item
$Q^\rtensor$ is strict and monotone, thus $\Ifree^\rtensor = I^\rtensor_{k0} = \emptyset$ for all $k \in [m+n]$.  
\item
The input vectors $\{\ket{v^\rtensor_\iota}\}$ are indexed by 
\beq
I^\rtensor_{k1} = \begin{cases}
I_{11} \times I'_{k1} & \text{if $k \leq m$} \\
I_{j1} & \text{if $k > m$, where $j = k-m+1$}
\end{cases}
\eeq
and given by  
\beq \label{e:reducedtensorproductcomposedefinput}
\ket{v^\rtensor_\iota} = \begin{cases}
\sum_{l \in [d]} \braket{l}{v_i} \ket{v'_{i'}}_{V_l} & \text{if $\iota = (i, i') \in \bigcup_{k \leq m} I^\rtensor_{k1}$} \\
\sum_{l \in [d]} \braket{l}{v_\iota} \ket{\pi_l}_{V_l} & \text{if $\iota \in \bigcup_{k > m} I^\rtensor_{k1}$}
\end{cases}
\eeq
\end{itemize}
\end{definition}

The intuition behind this construction is to compose the span programs in a tensor-product manner somewhat similar to \defref{t:directsumcomposedef}.  From~\cite[Def.~4.4]{Reichardt09spanprogram}, this would give a span program with target vector $\ket t \otimes \ket{t'} \in V \otimes V'$ and input vectors either $\ket{v_i} \otimes \ket{v'_{i'}}$ for $i \in I_{11}$ or $\ket{v_i} \otimes \ket{t'}$ otherwise.  However, if all the $I_{11}$ input vectors are zero in a coordinate $l \in [d]$, then the first type of input vectors are all zero on $\ket l \otimes V'$.  The second type of input vectors are all proportional to the same state $\ket l \otimes \ket{t'}$ on that coordinate, so we might as well just keep $\ket l \norm{\ket{t'}}$ as in the above definition.  The advantage over tensor-product composition is that the graph $G_{Q^\rtensor}$ associated to the span program $Q^\rtensor$ potentially has fewer vertices, with lower degrees.  

Here we have composed the span program $P_1$ into the first input of $P$.  Reduced tensor-product composition into the other inputs is defined symmetrically.  

By~\cite[Prop.~4.7]{Reichardt09spanprogram}, for arbitrary costs $r \in [0, \infty)^m$ and $s \in [0, \infty)^n$, 
\beq
\wsizeS {Q^\rtensor} {(r, s)} \leq \wsizeS P {(\wsizeS {P'} r, s)}
 \enspace .
\eeq

\bigskip

Now let us study reduced tensor-product composition for AND-OR formulas.  To start with, it will be helpful to give two examples of \defref{t:reducedtensorproductcomposedef}, for the cases $P' = P_{\OR}$ and $P' = P_{\AND}$.  Recall from~\cite[Def.~8.1]{Reichardt09spanprogram} that the biadjacency matrix for the bipartite graph $G_P(1^{1+n})$ is given by $\biadj_{G_P(1^{1+n})} = ( \begin{matrix} \ket t & A \end{matrix})$, where $A$ is the matrix whose columns are the input vectors of $P$, as defined in \secref{s:spanprogramdef}.  Assume that $I_{11} = \{1\}$ is a singleton set, with $\ket{v_1}$ the first column of $A$.  Rearrange the rows so that the nonzero entries of $\ket{v_1}$ come first (the set $Z$ from \defref{t:reducedtensorproductcomposedef} comes last), writing $\ket{v_1} = (\ket{\gamma}, 0)$, where $\ket{\gamma}$ is nonzero in every entry.  Writing $\ket t = (\ket{t_1}, \ket{t_2})$, $B_{G_P(1^{1+n})}$ decomposes as 
\beq \label{e:reducedtensorproductcomposestart}
B_{G_P(1^{1+n})} = \left(\begin{matrix}
\ket{t_1} & \ket{\gamma} & C_1 \\
\ket{t_2} & 0 & C_2
\end{matrix}\right)
 \enspace .
\eeq
\begin{itemize}
\item
First consider the case that $P' = P_{\AND}(s_1, s_2)$ from \defref{t:andorspanprogramdef}.  Let $Q^\rtensor$ be the composed span program from \defref{t:reducedtensorproductcomposedef}.  Then 
\beq \label{e:reducedtensorproductcomposeinnerand}
B_{G_{Q^\rtensor}(1^{2+n})} = \left(\begin{matrix}
\alpha_1 \ket{t_1} & \beta_1 \ket{\gamma} & 0 & \alpha_1 C_1 \\
\alpha_2 \ket{t_1} & 0 & \beta_2 \ket{\gamma} & \alpha_2 C_1 \\
\alpha \ket{t_2} & 0 & 0 & \alpha C_2
\end{matrix}\right)
 \enspace .
\eeq
For comparison, the tensor-product-composed span program $Q^\otimes$ from~\cite[Def.~4.4]{Reichardt09spanprogram} would have the biadjacency matrix 
\beq
B_{G_{Q^\rtensor}(1^{2+n})} = \left(\begin{matrix}
\alpha_1 \ket{t_1} & \beta_1 \ket{\gamma} & 0 & \alpha_1 C_1 \\
\alpha_2 \ket{t_1} & 0 & \beta_2 \ket{\gamma} & \alpha_2 C_1 \\
\alpha_1 \ket{t_2} & 0 & 0 & \alpha_1 C_2 \\
\alpha_2 \ket{t_2} & 0 & 0 & \alpha_2 C_2
\end{matrix}\right)
 \enspace .
\eeq
\item
Next consider the case that $P' = P_{\OR}(s_1, s_2)$ from \defref{t:andorspanprogramdef}.  Let $Q^\rtensor$ be the composed span program from \defref{t:reducedtensorproductcomposedef}.  Then 
\beq \label{e:reducedtensorproductcomposeinneror}
B_{G_{Q^\rtensor}(1^{2+n})} = \left(\begin{matrix}
\delta \ket{t_1} & \epsilon_1 \ket{\gamma} & \epsilon_2 \ket{\gamma} & \delta C_1 \\
\delta \ket{t_2} & 0 & 0 & \delta C_2
\end{matrix}\right)
 \enspace .
\eeq
\end{itemize}

From Eqs.~\eqnref{e:reducedtensorproductcomposeinnerand} and~\eqnref{e:reducedtensorproductcomposeinneror}, we can derive a bound on the growth of the norm of the entrywise absolute value of the biadjacency matrix for a span program, under reduced tensor-product composition with either $P_{\AND}$ or $P_{\OR}$: 

\begin{lemma} \label{t:reducedtensorproductcomposeinnerandornorm}
Let $P$ be a strict, monotone span program on $1+n$ input bits, with $\abs{I_{1,1}} = 1$.  For $s_1, s_2 > 0$, let $P'$ be either $P_{\AND}(s_1, s_2)$ or $P_{\OR}(s_1, s_2)$, from \defref{t:andorspanprogramdef}.  Let $Q^\rtensor$ be the reduced tensor-product composition of $P'$ into the first input of $P$, as in \defref{t:reducedtensorproductcomposedef}.  Then 
\beq \label{e:reducedtensorproductcomposeinnerandornorm}
\norm{\abst(\biadj_{G_{Q^\rtensor}(1^{2+n})})}^2 \leq \frac{\sqrt{s_1} + \sqrt{s_2}}{\sqrt{s_1 + s_2}} \norm{\abst(\biadj_{G_P(1^{1+n})})}^2
 \enspace .
\eeq
\end{lemma}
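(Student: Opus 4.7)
The plan is to bound $\norm{\abst(\biadj_{G_{Q^\rtensor}(1^{2+n})})}^2$ by applying $\abst(\biadj_{G_{Q^\rtensor}})$ to an arbitrary nonnegative test vector $u$ and then relating the result back to the action of $\abst(\biadj_{G_P})$ on a carefully rescaled vector $u'$ satisfying $\norm{u'} \leq \norm{u}$. Restricting the operator-norm optimization to nonnegative $u$ loses nothing because both matrices are entrywise nonnegative. Writing $u = (u_0, u_1, u_2, u_3)$ in the block-column form matching \eqnref{e:reducedtensorproductcomposeinnerand} or \eqnref{e:reducedtensorproductcomposeinneror}, it is convenient to abbreviate $a = u_0 \abst(\ket{t_1}) + \abst(C_1) u_3$ and $b = u_0 \abst(\ket{t_2}) + \abst(C_2) u_3$, since then for any scalar $u'_1 \geq 0$, $\norm{\abst(\biadj_{G_P}) (u_0, u'_1, u_3)}^2 = \norm{a + u'_1 \abst(\ket\gamma)}^2 + \norm{b}^2$.

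For the AND case, reading off \eqnref{e:reducedtensorproductcomposeinnerand}, the three block rows of $\abst(\biadj_{G_{Q^\rtensor}}) u$ are $\alpha_1 a + u_1 \abst(\ket\gamma)$, $\alpha_2 a + u_2 \abst(\ket\gamma)$, and $\alpha b$ (using $\beta_1 = \beta_2 = 1$). Expanding the squared norm yields $\alpha^2 (\norm{a}^2 + \norm{b}^2) + 2(\alpha_1 u_1 + \alpha_2 u_2) \langle a, \abst(\ket\gamma)\rangle + (u_1^2 + u_2^2) \norm{\abst(\ket\gamma)}^2$. Cauchy-Schwarz gives $\alpha_1 u_1 + \alpha_2 u_2 \leq \alpha \sqrt{u_1^2 + u_2^2}$, and nonnegativity of $a$ and $\abst(\ket\gamma)$ guarantees the cross term is nonnegative, so completing the square delivers $\norm{\abst(\biadj_{G_{Q^\rtensor}}) u}^2 \leq \alpha^2 \norm{\abst(\biadj_{G_P}) u'}^2$ with $u' = (u_0, \sqrt{u_1^2 + u_2^2}/\alpha, u_3)$. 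Since $\alpha \geq 1$, $\norm{u'}^2 \leq \norm{u}^2$, and the identity $\alpha^2 = \alpha_1^2 + \alpha_2^2 = (\sqrt{s_1}+\sqrt{s_2})/\sqrt{s_1+s_2}$ produces the claimed factor. The OR case proceeds along identical lines but is simpler: from \eqnref{e:reducedtensorproductcomposeinneror} only the $\abst(\ket\gamma)$ column gets split, Cauchy-Schwarz gives $\epsilon_1 u_1 + \epsilon_2 u_2 \leq \epsilon \sqrt{u_1^2 + u_2^2}$, and comparing with $u' = (u_0, \epsilon \sqrt{u_1^2 + u_2^2}, u_3)$ uses $\epsilon^2 = \alpha^2$.

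I expect the main difficulty will be the AND case, where the target column is split across two scaled rows at the same time that the $\ket\gamma$ column is split into two distinct columns, producing the awkward cross term $2(\alpha_1 u_1 + \alpha_2 u_2) \langle a, \abst(\ket\gamma)\rangle$. Packaging this cross term into a perfect square with prefactor exactly $\alpha^2$ (and not something worse) hinges on the precise normalization $\alpha_1^2 + \alpha_2^2 = \alpha^2$ built into \defref{t:andorspanprogramdef}, and the bookkeeping must simultaneously ensure that the rescaled test vector $u'$ satisfies both $\norm{\abst(\biadj_{G_{Q^\rtensor}}) u} \leq \alpha \norm{\abst(\biadj_{G_P}) u'}$ and $\norm{u'} \leq \norm{u}$; the condition $\alpha \geq 1$, which always holds here, is exactly what makes the second inequality hold after the rescaling $u'_1 = \sqrt{u_1^2+u_2^2}/\alpha$.
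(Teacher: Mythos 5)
Your proof is correct, and it takes a more explicit route than the paper's. The paper's proof observes that if the ``split'' weights $\beta_1,\beta_2$ (resp.\ $\delta$) were increased to $\alpha$ (resp.\ $\epsilon$), then the reduced-tensor-composed biadjacency matrix would have norm \emph{exactly} $\alpha$ (resp.\ $\epsilon$) times the original, and then appeals to monotonicity of the operator norm of a nonnegative matrix under entrywise decrease, since $\beta_j=\delta=1<\alpha,\epsilon$. That exact-equality claim is left unjustified in the paper; for the $\OR$ case it follows from a one-line $MM^\dagger$ computation, but for the $\AND$ case it actually hides a unitary block rotation in both row- and column-space. Your test-vector argument with Cauchy--Schwarz ($\alpha_1 u_1 + \alpha_2 u_2 \le \alpha\sqrt{u_1^2+u_2^2}$) arrives at the same inequality directly, without passing through the balanced-weight comparison; in exchange it needs the (standard, and correctly invoked) Perron--Frobenius fact that the norm of a nonnegative matrix is achieved on a nonnegative test vector so that the cross term $\langle a,\abst(\ket\gamma)\rangle$ has the right sign. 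Both arguments rest on the same block decomposition of the biadjacency matrix and on the identities $\alpha_1^2+\alpha_2^2=\alpha^2=\epsilon^2=(\sqrt{s_1}+\sqrt{s_2})/\sqrt{s_1+s_2}$ with $\alpha,\epsilon\ge1$; yours is somewhat longer but more self-contained, since it fully verifies the $\AND$ case rather than asserting it.
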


\begin{proof}
Indeed, first consider the case that $P' = P_{\OR}(s_1, s_2)$.  By rearranging and regrouping the columns of the biadjacency matrices, from Eq.~\eqnref{e:reducedtensorproductcomposestart}, $\abst(B_{G_P(1^{1+n})})$ can be rewritten as 
\beq
\left(\begin{matrix}A & B\end{matrix}\right)
\eeq
for some entry-wise nonnegative matrices $A$ and $B$, such that from Eq.~\eqnref{e:reducedtensorproductcomposeinneror}, $\abst(B_{G_{Q^\rtensor}(1^{2+n})})$ can be rewritten as 
\beq
\left(\begin{matrix}
\delta A & \epsilon_1 B & \epsilon_2 B
\end{matrix}\right)
 \enspace .
\eeq
Now if $\delta$ were $\sqrt{\epsilon_1^2 + \epsilon_2^2}$, then the norm of this latter matrix would be exactly $\sqrt{\epsilon_1^2 + \epsilon_2^2}$ times the norm of the former matrix.  Since in fact $\delta = 1 < \sqrt{\epsilon_1^2 + \epsilon_2^2}$, we have the desired inequality~\eqnref{e:reducedtensorproductcomposeinnerandornorm}.  

Next consider the case that $P' = P_{\OR}(s_1, s_2)$.  Rearrange the columns in Eqs.~\eqnref{e:reducedtensorproductcomposestart} and~\eqnref{e:reducedtensorproductcomposeinnerand} to rewrite $\abst(B_{G_P(1^{1+n})})$ and $\abst(B_{G_{Q^\rtensor}(1^{2+n})})$ as, respectively, 
\beq
\left(\begin{matrix}
A & B \\
C & 0
\end{matrix}\right)
\qquad \text{and} \qquad
\left(\begin{matrix}
\alpha_1 A & \beta_1 B & 0 \\
\alpha_2 A & 0 & \beta_2 B \\
\alpha C & 0 & 0
\end{matrix}\right)
 \enspace ,
\eeq
for some entry-wise nonnegative matrices $A, B, C$.  Now if $\beta_1$ and $\beta_2$ were equal to $\alpha$, then the norm of the right biadjacency matrix would be exactly $\alpha$ times the norm of the left matrix.  Since in fact $\beta_1 = \beta_2 = 1 < \alpha$, Eq.~\eqnref{e:reducedtensorproductcomposeinnerandornorm} holds.  
\end{proof}

Let $\varphi$ be an AND-OR formula of size $n$, in which each AND and each OR gate has fan-in two.  Let $r$ be the root of $\varphi$.  For each vertex $v$ in $\varphi$, let $P_v$ be the span program 
\beq \label{e:andorPv}
P_v = \begin{cases} P_{\AND}\big(s_1(v), s_2(v)\big) & \text{if $g_v$ is an $\AND$ gate} \\ P_{\OR}\big(s_1(v), s_2(v)\big) & \text{if $g_v$ is an $\OR$ gate} \end{cases}
\eeq
Denote by $\alpha(v), \alpha_j(v), \beta_j(v)$ and $\epsilon(v), \delta(v), \epsilon_j(v)$, for $j \in [2]$, the parameters of $P_v$ from \defref{t:andorspanprogramdef}.  

Let $P_\varphi$ be the span program formed by composing the span programs $P_v$ according to the structure of $\varphi$ from the root $r$ of $\varphi$ toward the leaves, in an otherwise arbitrary order, using reduced tensor-product composition with respect to the bases of \defref{t:andorspanprogramdef}.  Note that $P_\varphi$ is strict and monotone.  If for each $v$, $s_1(v)$ and $s_2(v)$ are set to be the sizes of the respective input subformulas, then $\wsize {P_\varphi} = \sqrt n$.  \figref{f:reducedtensorproductcompositionexamples} has several examples.  

\begin{figure}
\centering
\begin{tabular}{c@{$\quad$}c@{$\quad$}c}
\subfigure[$x_1 \vee x_2$]{\label{f:orgate}\raisebox{.2in}{\includegraphics[scale=1]{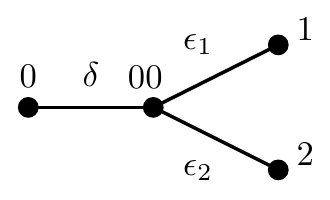}}}&
\subfigure[$x_1 \wedge x_2$]{\label{f:andgate}\includegraphics[scale=1]{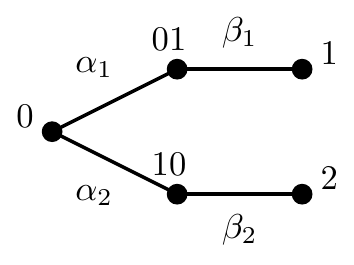}}\\
\subfigure[$(x_1 \wedge x_2) \vee x_3$]{\label{f:or_and}\raisebox{.05in}{\includegraphics[scale=1]{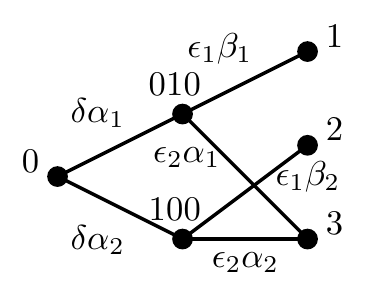}}}&
\subfigure[$(x_1 \vee x_2) \wedge x_3$]{\label{f:and_or}\includegraphics[scale=1]{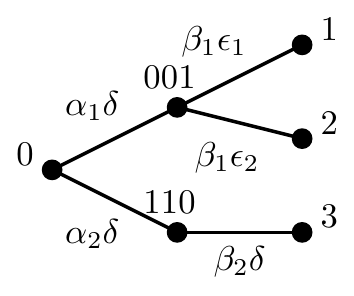}}
&\raisebox{.05cm}{\begin{picture}(140,0)\subfigure[$\big((x_1 \wedge x_2) \vee x_3\big) \wedge x_4$]{\label{f:and_or_and}\includegraphics[scale=1]{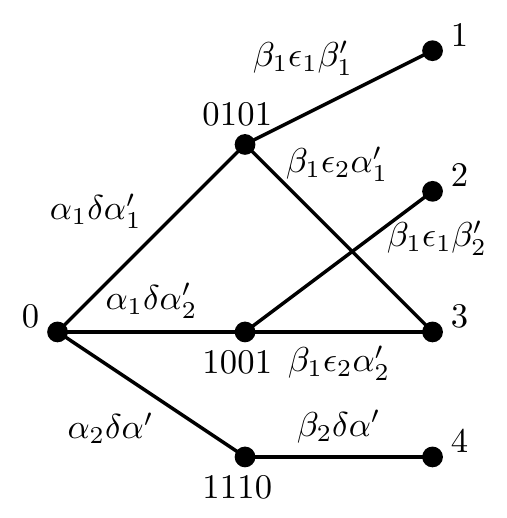}}\end{picture}}
\end{tabular}
\caption{Examples to illustrate reduced tensor-product composition for $\AND_2$-$\OR_2$ formulas.  For each formula $\varphi$, of size $n$, the graph $G_{P_\varphi}(1^n)$ is displayed.  Vertices corresponding to maximal false inputs are so labeled.  In (e), the primed variables refer to the span program coefficients for $x_1 \wedge x_2$.  Notice in each example that a vertex labeled with input $x \in \{0,1\}^n$ is connected exactly to those input bits $j$ with $x_j = 0$; this is a consequence of Eq.~\eqnref{e:reducedtensorproductANDORcomposeinput}.  Also notice that the graph's structure changes locally as each additional gate is composed onto the end of the formula, e.g., from (d) to~(e).  However, edge weights change nonlocally.  See also \figref{f:exampletensorproduct}.} \label{f:reducedtensorproductcompositionexamples}
\end{figure}

We can characterize $P_\varphi$ in terms of the set of ``maximal false" inputs, or minimal zero-certificates, to the formula $\varphi$.  

\begin{definition} \label{t:maximalfalsedef}
An input $x \in \B^n$ is a maximal false input to $\varphi$ if $\varphi(x) = 0$ and flipping any bit of $x$ from $0$ to $1$ changes the formula evaluation to $1$.  
\end{definition}

To any maximal false input $x$ corresponds a subtree $T_x$ of $\varphi$.  $T_x$ is rooted at $r$, and its leaves are exactly the leaves of $\varphi$ corresponding to input bits $k$ with $x_k = 0$.  Note that for each vertex $v \in T_x$, if $g_v$ is an $\OR$ gate, then both of $v$'s children are also in $T_x$; and if $g_v$ is an $\AND$ gate, then exactly one of $v$'s children is in $T_x$.  

\begin{lemma} \label{t:reducedtensorproductANDORcompose}
Let $U$ be the set of maximal false inputs to $\varphi$.  Then $P_\varphi$ is given as follows: 
\begin{itemize}
\item
Its inner product space is $V = \C^U$.  
\item
Its target vector is 
\beq \label{e:reducedtensorproductANDORcomposetarget}
\ket t = \sum_{x \in U} \left( \prod_{v \in T_x} \left\{ \begin{array}{cl} \alpha_{j(x, v)}(v) & \text{if $g_v = \AND_2$} \\ \delta(v) & \text{if $g_v = \OR_2$} \end{array}\right\} \cdot \prod_{v \notin T_x} \left\{ \begin{array}{cl} \alpha(v) & \text{if $g_v = \AND_2$} \\ \delta(v) & \text{if $g_v = \OR_2$} \end{array}\right\} \right) \ket x
 \enspace .
\eeq
Both products are over internal vertices only (the parameters $\alpha(v), \ldots$ are not defined for leaves).  In the first term $j(x, v) \in [2]$ indicates the child of $v$ that is in $T_x$.  
\item
Its input vectors are indexed by $I_{k1} = \{ k \}$, $I_{k0} = \emptyset$ for $k \in [n]$.  Letting $\gamma_k$ be the simple path from the $k$'th leaf to $r$, $\ket{v_k}$ is given by 
\beq \label{e:reducedtensorproductANDORcomposeinput}
\ket{v_k} = \sum_{x \in U : \, x_k = 0}
\left(
\prod_{v \in \gamma_k} \left\{ \begin{array}{c} \beta_{j(k, v)}(v) \\ \epsilon_{j(k, v)}(v) \end{array}\right\}
\cdot \prod_{v \in T_x \smallsetminus \gamma_k} \left\{ \begin{array}{c} \alpha_{j(x, v)}(v) \\ \delta(v) \end{array}\right\}
\cdot \prod_{v \notin T_x} \left\{ \begin{array}{c} \alpha(v) \\ \delta(v) \end{array}\right\}
\right)
\ket x
 \enspace .
\eeq
Here in each bracketed expression the top term is to be taken if $g_v$ is an AND gate, and the bottom term if $g_v$ is an OR gate.  The indices $j(x, v) \in [2]$ are as in the expression for $\ket t$, while $j(k, v) \in [2]$ indicates the child of $v$ that is along the path $\gamma_k$.  Products are over internal vertices only.  
\end{itemize}
In particular, for $x \in U$, $\braket{x}{v_k} = 0$ if and only if $x_k = 1$.  Thus $\ket x / \braket t x$ is a witness for $f_{P_\varphi}(x) = 0$.  
\end{lemma}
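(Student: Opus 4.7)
The plan is to prove the lemma by induction on the number of internal gates of $\varphi$, mirroring the order in which $P_\varphi$ is actually constructed: each step attaches one fan-in-two AND or OR gate at a current leaf of the partial formula via reduced tensor-product composition (\defref{t:reducedtensorproductcomposedef}). The base case, $\varphi$ a single AND or OR gate, has $U=\{01,10\}$ (resp.\ $\{00\}$), so that $\C^U$ has dimension $2$ (resp.\ $1$), matching the inner product space in \defref{t:andorspanprogramdef}; since $\varphi$ has a single internal vertex $r$, the product formulas \eqnref{e:reducedtensorproductANDORcomposetarget}--\eqnref{e:reducedtensorproductANDORcomposeinput} collapse to a single factor that directly reproduces $\ket t$, $\ket{v_1}$, and $\ket{v_2}$.

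For the inductive step, assume the lemma holds for a partial formula $\psi$ built so far and let $k^*$ be the current leaf to be replaced by a gate at vertex $v$, producing $\psi'$ and $Q^\rtensor$. The key combinatorial input is a bijection of basis elements: $U(\psi')$ is in bijection with $\{(u,w) : u \in U(\psi),\ u_{k^*}=0,\ w \in U(\varphi_v)\} \cup \{u \in U(\psi) : u_{k^*}=1\}$, where in the second set the two new leaves carry value $1$. The forward direction uses monotonicity of $\varphi_v$: if $y \in U(\psi')$ has $\varphi_v(y|_v)=1$, then minimality forces $y|_v=(1,1)$, while $\varphi_v(y|_v)=0$ forces $y|_v$ to itself be an MFI of $\varphi_v$. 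On the span-program side, the inductive form \eqnref{e:reducedtensorproductANDORcomposeinput} gives $\braket{u}{v_{k^*}} \neq 0$ iff $u_{k^*}=0$, so the set $Z$ from \defref{t:reducedtensorproductcomposedef} equals $\{u : u_{k^*}=1\}$ and the decomposition $V^\rtensor=\bigoplus_u V_u$ identifies with $\C^{U(\psi')}$ along the above bijection.

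Next I would verify the coefficient formulas. For $\ket{t^\rtensor}$, \eqnref{e:reducedtensorproductcomposedeftarget} gives coefficient $\braket{u}{t_\psi}\braket{w}{t_{P_v}}$ on new MFIs $(u,w)$ with $u_{k^*}=0$ and coefficient $\braket{u}{t_\psi}\norm{\ket{t_{P_v}}}$ on new MFIs $u$ with $u_{k^*}=1$. The inductive formula for $\braket{u}{t_\psi}$ already supplies every factor in \eqnref{e:reducedtensorproductANDORcomposetarget} for $\psi'$ except at $v$, and the extra factor fills it in: from the base case, $\braket{w}{t_{P_v}}$ equals $\alpha_{j(w,v)}(v)$ (AND) or $\delta(v)$ (OR), matching the $v \in T_y$ entry, while $\norm{\ket{t_{P_v}}}$ equals $\alpha(v)$ or $\delta(v)$, matching the $v \notin T_y$ entry. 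For input vectors I would apply \eqnref{e:reducedtensorproductcomposedefinput} analogously: for an unaffected leaf $k \neq k^*$ the path $\gamma_k$ misses $v$, and the new $v$-factor slots into ``$T_y \smallsetminus \gamma_k$'' or ``not in $T_y$''; for the two new leaves $k$ below $v$, the base-case input $\ket{v'_{k'}}=\beta_{j(k',v)}(v)\ket{w}$ or $\epsilon_{j(k',v)}(v)\ket{w}$ supplies exactly the $\gamma_k$-factor at $v$ (with $j(k,v)=j(k',v)$), while the inductive $\braket{u}{v_{k^*}}$ takes care of the rest. The final witness claim is then immediate, since by \eqnref{e:reducedtensorproductANDORcomposeinput} $\braket{x}{v_k}=0$ whenever $x_k=1$, so $\Pi(x)A^\adjoint(\ket{x}/\braket{t}{x})$ is supported on leaves with $x_k=0$ (all in $I_{k,0}=\emptyset$) and hence vanishes, while $\braket{t}{\ket{x}/\braket{t}{x}} = 1$ by construction.

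The main obstacle is the bookkeeping in the inductive step: one must carefully match the single new factor at $v$ produced by reduced tensor-product composition to the correct slot of the product formulas for every new MFI $y$ and every leaf $k$, handling the cases $v \in T_y$ vs.\ $v \notin T_y$ and $v \in \gamma_k$ vs.\ $v \notin \gamma_k$. A subtle point is that in the OR case the coincidence $\delta(v)=1=\norm{\ket{t_{P_{\OR}}}}$ is precisely what lets the formula use $\delta(v)$ uniformly for vertices both inside and outside $T_y$, whereas in the AND case the distinction between $\alpha_j(v)$ and $\alpha(v)$ carries genuine information and must be traced through each case.
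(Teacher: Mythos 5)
Your proof takes essentially the same approach as the paper's: both argue by induction on gates, adding one gate at a time from the root outward, identifying the set $Z$ in reduced tensor-product composition with the maximal false inputs that are $1$ at the leaf being replaced, establishing the bijection of maximal false inputs (one-to-one for OR, with $0x \mapsto \{01x,10x\}$ and $1x \mapsto 11x$ for AND, which is exactly your set decomposition specialized to $|U(\varphi_v)|\in\{1,2\}$), and then matching the single new coefficient factor at $v$ against the appropriate slot in the product formulas. Your treatment is correct; the only minor glitch is the phrasing of the witness claim at the end, where the intended argument is that $A^\adjoint\ket{x}/\braket{t}{x}$ is supported on indices $k$ with $x_k=0$ while $\Pi(x)$ projects onto indices $k$ with $x_k=1$ (since $I_{k,1}=\{k\}$, $I_{k,0}=\emptyset$), so the two supports are disjoint and the product vanishes.
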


\begin{proof}
Although the expressions are intimidating, the proof is a simple induction.  Assume that we have completed reduced tensor-product composition of the span programs $P_v$ for vertices $v$ belonging to a subtree $\varphi'$ that includes $r$.  Then $\varphi'$ is also an AND-OR formula.  Assume that the characterization of \lemref{t:reducedtensorproductANDORcompose} holds for $\varphi'$.  

Consider adding a new vertex $v$ onto $\varphi'$, yielding a formula $\varphi''$.  We will use primed variables, $U'$, $T_x'$, $\gamma_k'$, to refer to $\varphi'$ and double-primed variables for $\varphi''$.  

The base case of the induction is if $\varphi'$ is empty and $\varphi'' = \{r\}$.  Then the claim is a consequence of \defref{t:andorspanprogramdef}.  If $g_r$ is an AND gate, then $U'' = \{ 01, 10 \}$, while $U'' = \{ 00 \}$ if $g_r$ is an OR gate.  

Now assume that the size of $\varphi'$ is at least two.  By symmetry, assume that the first input of $\varphi'$ is replaced by $v$.  Then by induction, note that $Z = \{ \text{maximal false inputs $x$ to $\varphi'$} \,\vert\, x_1 = 1 \}$ in \defref{t:reducedtensorproductcomposedef}.  There are two cases, depending on whether the gate $g_v$ is an OR or an AND.  
\begin{itemize}
\item
If the new gate is an OR, then the maximal false inputs of $\varphi'$ are in one-to-one correspondence to those of the new formula $\varphi''$.  By \defref{t:reducedtensorproductcomposedef} the inner product space does not change.  Moreover, the target vector is scaled simply by $\delta(v)$.  All but the first input vector of $P_{\varphi'}$ are scaled by $\delta(v)$.  The first input vector is split into two vectors, scaled by $\epsilon_1(v)$ and $\epsilon_2(v)$.  By examination of Eqs.~\eqnref{e:reducedtensorproductANDORcomposetarget} and~\eqnref{e:reducedtensorproductANDORcomposeinput}, the induction assumption is maintained.  
\item
If the new gate is an AND, then a maximal false input $1x$ of $\varphi'$ corresponds to the maximal false input $11x$ of $\varphi''$, while any maximal false input $0x$ of $\varphi'$ splits into the two maximal false inputs $01x$ and $10x$ of $\varphi''$.  Consider first a maximal false input $1x$.  Since $1x \in Z$, the target and all input vectors are simply scaled by $\alpha(v)$ in this coordinate.  This satisfies the induction hypothesis since $v \notin T_{11x}''$.  

Next consider a maximal false input $0x$ to $\varphi'$; $v \in T_{01x}'' \cap T_{10x}''$.  By Eq.~\eqnref{e:reducedtensorproductcomposedeftarget}, the $0x$ coordinate of the target vector splits in two, weighted by $\alpha_1(v)$ and $\alpha_2(v)$, so Eq.~\eqnref{e:reducedtensorproductANDORcomposetarget} continues to hold.  
Similarly, for the unchanged inputs $k$, i.e., inputs with $v \notin \gamma_k''$, $\Big(\begin{smallmatrix} \braket{01x}{v_k''} \\ \braket{10x}{v_k''} \end{smallmatrix}\Big) = \Big(\begin{smallmatrix} \alpha_1(v) \\ \alpha_2(v) \end{smallmatrix}\Big) \braket{0x}{v_k'}$ by Eq.~\eqnref{e:reducedtensorproductcomposedefinput}.  This is accounted for by the second term of Eq.~\eqnref{e:reducedtensorproductANDORcomposeinput}.  By Eq.~\eqnref{e:reducedtensorproductcomposedefinput}, $\Big(\begin{smallmatrix} \braket{01x}{v_1''} \\ \braket{10x}{v_1''} \end{smallmatrix}\Big) = \Big(\begin{smallmatrix} \beta_1(v) \\ 0 \end{smallmatrix}\Big) \braket{0x}{v_1'}$.  This shows up in the first term of Eq.~\eqnref{e:reducedtensorproductANDORcomposeinput}, since $v \in \gamma_1''$.  A similar argument holds for the input vector $\ket{v_2''}$.  
\qedhere
\end{itemize}
\end{proof}

One straightforward consequence of \lemref{t:reducedtensorproductANDORcompose} is that the order in which gates are composed in $P_\varphi$ does not matter.  In fact, the composition order does not matter for direct-sum, tensor-product or reduced tensor-product composition of arbitrary span programs.  Rather than prove this claim, though, in the sequel we will continue to order composition from the root toward the leaves.  

Since its construction uses reduced tensor-product composition, $P_\varphi$ has similar properties as a canonical span program~\cite[Def.~5.1]{Reichardt09spanprogram}.  A canonical span program has a dimension for every false input, whereas $P_\varphi$ has a dimension only for each maximal false input.  Unlike a canonical span program, $P_\varphi$'s target vector does not in general have uniform weights.  However, if $x \in \B^n$ has $\varphi(x) = 0$, let $x'$ be a maximal false input that lies above $x$, i.e., the bitwise AND of $x$ and $x'$ is $x$.  Then $\ket{x'} / \braket{t}{x'}$ is a witness for $f_{P_\varphi}(x) = 0$.  This property is a main motivation for defining reduced tensor-product span program composition; it does not hold for tensor-product composition. 

Unfortunately, a construction based only on reduced tensor-product span program composition will not work for applying \thmref{t:generalspanprogramalgorithmnonblackbox}.  There are two problems: 
\begin{enumerate}
\item
First, the number of maximal false inputs can be exponentially large in the size $n$ of the formula.  For example, if $\varphi$ is a balanced formula of depth $d$ with alternating levels of $\AND_2$ and $\OR_2$ gates, starting with AND, then there are $\frac{1}{2} 4^{2^{\lfloor(d - 1)/2\rfloor}} = e^{\Omega(\sqrt n)}$ maximal false inputs.  Since there are only $n$ input vectors, this implies that the maximum degree of a vertex in the graph must be exponentially large.  
\item
Second, the norm $\norm{\abst(A_{G_{P_\varphi}})}$ can also be exponentially large.  Indeed, $\alpha(v)$ can be as large as $2^{1/4}$ when the two input subformulas to $v$ have the same size.  Then, for example, a balanced formula of depth $d$ with only $\AND_2$ gates leads to a target vector with coefficients each $2^{\frac{1}{4} (2^d - (2d+1))}$.  
\end{enumerate}
The advantage of reduced tensor-product composition, though, is that because it outputs a \emph{strict} span program $P_\varphi$, the full witness size can be easily bounded in terms of the witness size.  

The disadvantages of reduced tensor-product composition are worst for very balanced formulas, while its advantage is most helpful for unbalanced formulas.  This suggests that a combining the two techniques might be useful for general general AND-OR formulas.  

Before turning to such an approach, though, let us first restate \lemref{t:reducedtensorproductANDORcompose} for the case of a maximally unbalanced formula.  \lemref{t:reducedtensorproductANDORcomposemaxunbalanced} is a key structural claim behind our proof of \thmref{t:andor}.  Once again, \figref{f:reducedtensorproductcompositionexamples} has several examples.  

\begin{lemma} \label{t:reducedtensorproductANDORcomposemaxunbalanced}
Let $\psi$ be a maximally unbalanced AND-OR formula, i.e., the depth of the formula equals the number of gates $J$.  Index the inputs in order from farthest to closest to the root (see \figref{f:gatepath}).  Let $T_{\AND} = \{ j \in [J+1] : \text{$j$ is an input to an AND gate} \}$, and $T_{\OR} = [J+1] \smallsetminus T_{\AND}$.  

Then the maximal false inputs $U$ to $\psi$ are in one-to-one correspondence to the elements of $T_F = T_{\AND} \cup \{ 1 \}$.  To $j \in T_F$ corresponds the maximal false input $x^j \in \B^{J+1}$ given by 
\beq
x^j_k = \begin{cases}
0 & \text{if $k = j$, or if $k > j$ and $k \in T_{\OR}$} \\
1 & \text{if $k < j$, or if $k > j$ and $k \in T_{\AND}$} 
\end{cases}
\eeq

The reduced-tensor-product-composed span program $P_\psi$ is given as follows: 
\begin{itemize}
\item
Its inner product space is $V = \C^U$.  
\item
Its target vector is 
\beq \label{e:reducedtensorproductANDORcomposemaxunbalancedtarget}
\ket t = \sum_{j \in T_F} \left( \prod_{v \in \gamma_j} \left\{ \begin{array}{c} \alpha_{\iota(j, v)}(v) \\ \delta(v) \end{array}\right\} \cdot \prod_{v \notin \gamma_j} \left\{ \begin{array}{c} \alpha(v) \\ \delta(v) \end{array}\right\} \right) \ket{x^j}
 \enspace .
\eeq
Here, in the first term $\iota(j, v) \in [2]$ indicates the child of $v$ that is in $\gamma_j$, the simple path from the $j$th vertex to the root.  In each bracketed expression the top term is to be taken if $g_v$ is an AND gate, and the bottom term if $g_v$ is an OR gate.  The products are over internal vertices~$v$.  
\item
Its input vectors are indexed by $I_{k1} = \{ k \}$, $I_{k0} = \emptyset$ for $k \in [J+1]$.  For $k \in T_{\AND} \cup [\min T_{\AND}]$, $\ket{v_k}$ is given by 
\beq \label{e:reducedtensorproductANDORcomposemaxunbalancedinputAND}
\ket{v_k} = 
\prod_{v \in \gamma_k} \left\{ \begin{array}{c} \beta_{\iota(k, v)}(v) \\ \epsilon_{\iota(k, v)}(v) \end{array}\right\}
\cdot \prod_{v \notin \gamma_k} \left\{ \begin{array}{c} \alpha(v) \\ \delta(v) \end{array}\right\}
\ket{x^\kappa}
\eeq
where $\kappa = k$ if $k \in T_{\AND}$ and $\kappa = 1$ otherwise.  For $k \notin T_{\AND} \cup [\min T_{\AND}]$, 
\beq \label{e:reducedtensorproductANDORcomposemaxunbalancedinputOR}
\ket{v_k} = \sum_{\substack{j \in T_F \\ j < k}}
\left(
\prod_{v \in \gamma_k} \left\{ \begin{array}{c} \beta_{\iota(k, v)}(v) \\ \epsilon_{\iota(k, v)}(v) \end{array}\right\}
\cdot \prod_{v \in \gamma_j \smallsetminus \gamma_k} \left\{ \begin{array}{c} \alpha_{\iota(j, v)}(v) \\ \delta(v) \end{array}\right\}
\cdot \prod_{v \notin \gamma_j} \left\{ \begin{array}{c} \alpha(v) \\ \delta(v) \end{array}\right\}
\right)
\ket{x^j}
 \enspace .
\eeq
The same conventions are understood as in the expression for $\ket t$.  
\end{itemize}
In particular, for $x \in U$, $\ket x / \braket t x$ is a witness for $f_{P_\psi}(x) = 0$.  
\end{lemma}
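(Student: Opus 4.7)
The plan is to derive \lemref{t:reducedtensorproductANDORcomposemaxunbalanced} as a specialization of \lemref{t:reducedtensorproductANDORcompose} using the rigid path structure of the maximally unbalanced formula~$\psi$. The two pieces of work are (i) establishing the bijection between the maximal false inputs $U$ of $\psi$ and $T_F = T_{\AND} \cup \{1\}$ together with the identification $T_{x^j} = \gamma_j$, and (ii) substituting this identification into \eqnref{e:reducedtensorproductANDORcomposetarget} and \eqnref{e:reducedtensorproductANDORcomposeinput} and verifying they collapse to \eqnref{e:reducedtensorproductANDORcomposemaxunbalancedtarget}, \eqnref{e:reducedtensorproductANDORcomposemaxunbalancedinputAND}, and \eqnref{e:reducedtensorproductANDORcomposemaxunbalancedinputOR}. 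The final conclusion that $\ket{x^j}/\braket{t}{x^j}$ is a witness for $f_{P_\psi}(x^j)=0$ then carries over from \lemref{t:reducedtensorproductANDORcompose} verbatim.

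For (i), I would exploit the fact that in $\psi$ every gate has exactly one gate-child and one leaf-child, except for the deepest gate $g_1$, which has two leaf-children (indices $1$ and $2$). Given a maximal false input $x$, the subtree $T_x$ described before \lemref{t:reducedtensorproductANDORcompose} contains the root; since OR gates in $T_x$ contain both children while AND gates contain exactly one, and since the gates of $\psi$ lie on a single path, the gates of $T_x$ form a suffix $\{g_{k^*}, g_{k^*+1}, \ldots, g_J\}$ for some $k^* \geq 1$. If $k^* > 1$, then the absence of $g_{k^*-1}$ from $T_x$ forces $g_{k^*}$ to be an AND gate whose leaf-child belongs to $T_x$, so the corresponding leaf index $j$ lies in $T_{\AND}$. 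If $k^* = 1$, then $j = 1$ works in either sub-case (OR or AND for $g_1$), with $j = 2$ also possible when $g_1$ is AND. A direct check confirms that in each case $x = x^j$, that $T_{x^j}$ equals the set of internal vertices of $\gamma_j$, and conversely that each $x^j$ is maximally false (flipping any 0-bit propagates truth up to the root).

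For (ii), with $T_{x^j} = \gamma_j$, the target formula \eqnref{e:reducedtensorproductANDORcomposemaxunbalancedtarget} is immediate from \eqnref{e:reducedtensorproductANDORcomposetarget}, since the child of $v$ lying in $T_{x^j}$ is the same as $\iota(j,v)$. For input vectors, the criterion $x^j_k = 0 \iff (k = j) \lor (k > j \text{ and } k \in T_{\OR})$ decomposes the sum over $j$ with $x^j_k = 0$ into three cases: a single term $\ket{x^k}$ when $k \in T_{\AND}$ (since only $j = k$ contributes); a single term $\ket{x^1}$ when $k \in [\min T_{\AND}]$ with $k \notin T_{\AND}$ (since no $j \in T_{\AND}$ is at most $k$); and otherwise a sum over all $j \in T_F$ with $j < k$. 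The first two cases produce \eqnref{e:reducedtensorproductANDORcomposemaxunbalancedinputAND}; the third produces \eqnref{e:reducedtensorproductANDORcomposemaxunbalancedinputOR}.

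The only subtle point is verifying the compact two-product form of \eqnref{e:reducedtensorproductANDORcomposemaxunbalancedinputAND}. For $k \in T_{\AND}$ one has $\gamma_k = T_{x^k}$, making the middle product in \eqnref{e:reducedtensorproductANDORcomposeinput} empty. For $k \leq \min T_{\AND}$ with $k \notin T_{\AND}$ the middle product instead ranges over $\gamma_1 \setminus \gamma_k$; every gate there has its leaf-child index in $\{1, 2, \ldots, \min T_{\AND} - 1\} \subseteq T_{\OR}$ and is therefore an OR gate, so only unsubscripted $\delta$-factors appear and they combine with $\prod_{v \notin \gamma_1}\{\alpha, \delta\}$ to yield the stated $\prod_{v \notin \gamma_k}\{\alpha, \delta\}$. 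This bookkeeping around the deepest gate and the distinguished role of input~$1$ is the main obstacle; once it is handled the rest of the proof is a routine substitution into \lemref{t:reducedtensorproductANDORcompose}.
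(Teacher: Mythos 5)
Your proposal is correct and follows the paper's own approach: specialize \lemref{t:reducedtensorproductANDORcompose} to the path-formula $\psi$ by observing that $T_{x^j}$ and $\gamma_j$ agree on internal vertices (the paper states $T_{x^j} = \gamma_j \cup \{k \in T_{\OR} : k > j\}$; your phrase ``$T_{x^j}$ equals the set of internal vertices of $\gamma_j$'' should read that they agree on internal vertices, but this is only a slip of wording), then decompose the sum over contributing $j$'s in Eq.~\eqnref{e:reducedtensorproductANDORcomposeinput} into the three cases according to whether $k \in T_{\AND}$, $k < \min T_{\AND}$, or neither. You spell out two points the paper leaves implicit---the explicit bijection $U \leftrightarrow T_F$ and the observation that all gates in $\gamma_1 \smallsetminus \gamma_k$ are OR gates when $k < \min T_{\AND}$, which is needed for the middle product to merge into $\prod_{v\notin\gamma_k}$---and both of these details are accurate.
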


\begin{proof}
The lemma follows from \lemref{t:reducedtensorproductANDORcompose}.  The main simplification to make is that the tree $T_{x^j} = \gamma_j \cup \{ k \in T_{\OR} : k > j \}$, so $T_{x^j}$ and $\gamma_j$ agree on internal vertices.  Then Eq.~\eqnref{e:reducedtensorproductANDORcomposeinput} simplifies~to 
\beq
\ket{v_k} = \sum_{j \in T_F : \, x^j_k = 0}
\left(
\prod_{v \in \gamma_k} \left\{ \begin{array}{c} \beta_{\iota(k, v)}(v) \\ \epsilon_{\iota(k, v)}(v) \end{array}\right\}
\cdot \prod_{v \in \gamma_j \smallsetminus \gamma_k} \left\{ \begin{array}{c} \alpha_{\iota(j, v)}(v) \\ \delta(v) \end{array}\right\}
\cdot \prod_{v \notin \gamma_j} \left\{ \begin{array}{c} \alpha(v) \\ \delta(v) \end{array}\right\}
\right)
\ket{x^j}
 \enspace .
\eeq
This further simplifies to Eq.~\eqnref{e:reducedtensorproductANDORcomposemaxunbalancedinputAND} when $k \in T_{\AND} \cup [\min T_{\AND}]$ because then the sum over $j \in T_F$ with $x^j_k = 0$ has only a single term, either $k$ itself or $1$ if $k < \min T_{\AND}$.  It simplifies to Eq.~\eqnref{e:reducedtensorproductANDORcomposemaxunbalancedinputOR} when $k \notin T_{\AND} \cup [\min T_{\AND}]$ because then $x^j_k = 0$ for $j \in T_F$ exactly when $j < k$.  
\end{proof}

\subsection{AND-OR span program construction using hybrid composition} \label{s:hybridconstruction}

Let $\varphi$ be an AND-OR formula in which every gate has fan-in two, possibly after expanding higher fan-in gates.  We now construct the span program $P_\varphi$ on which the algorithm in \thmref{t:andor} is based.  For each internal vertex $v$ of $\varphi$, fix the parameters $s_1(v)$ and $s_2(v)$ to the sizes of the respective input subformulas.  Assume without loss of generality that $s_1(v) \geq s_2(v)$ always.  

To begin the construction, we mark certain edges of the formula.  Marked edges are termed ``checkpoints," because they will serve to cut off reduced tensor-product span program compositions and the associated exponential growths in norm and degree.  There are two steps to placing the checkpoints: 
\begin{enumerate}
\item
For every internal vertex $v$, i.e., a vertex that is not a leaf, mark the edge to the smaller of its two input subformulas.  Break ties arbitrarily.  

After this step, every gate has exactly one unmarked input edge, so the marked edges divide $\varphi$ into a set of paths.  Let 
\beq \label{e:smallpaths}
\cS = \{ \text{vertices $v$} : \text{for one of the paths, $v$ is the endpoint closer to the root $r$} \}
 \enspace .
\eeq
Certainly the root $r$ is itself in $\cS$.  

\item
In the second step, place more checkpoints to split up paths that are too long.  For each path, apply the following rule: 

Starting at the far end of the path and moving toward the root, keep track of the product 
\beq \label{e:checkpointrule}
\prod_v \left\{\begin{array}{cl} \alpha(v) & \text{if $g_v = \AND_2$} \\ \epsilon(v) & \text{if $g_v = \OR_2$} \end{array}\right\}
= 
\prod_v \bigg( \frac{\sqrt{s_1(v)}+\sqrt{s_2(v)}}{\sqrt{s_1(v) + s_2(v)}} \bigg)^{1/2}
\eeq
for internal vertices $v$ along the path.  Note that $\alpha(v), \epsilon(v) \leq 2^{1/4} \approx 1.19$.  After adding a vertex that makes the product exceed $\sqrt e \approx 1.65$, split the path by adding a checkpoint.  

After finishing this step, there will be no paths with the above product more than $2^{1/4} \sqrt e$, and for all paths except possibly those ending at a vertex in $\cS$, the product will be at least~$\sqrt e$.  
\end{enumerate}
We remark that our analysis in \secref{s:andoranalysis} is not overly sensitive to modifying these rules, for example by using a different constant greater than one instead of $\sqrt e$ in the second step.  

\bigskip

Based on the checkpointed formula, we can now construct the composed span program $P_\varphi$: 

\begin{enumerate}
\item
First, for each path $\xi$, compose the span programs $P_v$ from Eq.~\eqnref{e:andorPv} along the path, using reduced tensor-product composition, to obtain a span program $P_\xi$.  
\item
Next, apply direct-sum composition to compose the $P_\xi$ span programs constructed in the first step across the checkpointed edges.  
\end{enumerate}

Let $P_\varphi$ be the resulting span program.  \figref{f:exampletensorproduct} shows an example.  
Notice that when reduced tensor-product and direct-sum span program composition are both used, 
their relative order matters.  In constructing $P_\varphi$, all direct-sum composition comes last.

\subsection{Analysis of $P_\varphi$} \label{s:andoranalysis}

\begin{proof}[Proof of \thmref{t:andor}]
From \claimref{t:andorspanprogramwsize} and the span program composition results~\cite[Theorem~4.3, Prop.~4.7]{Reichardt09spanprogram}, we obtain: 

\begin{lemma}
$P_\varphi$ computes $f_{P_\varphi} = \varphi$ and $\wsize {P_\varphi} = \sqrt n$.  
\end{lemma}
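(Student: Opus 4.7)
\emph{Proof plan.} The statement is essentially a bookkeeping exercise: both reduced tensor-product and direct-sum composition are known to preserve the computed function, and the witness-size arithmetic works out cleanly because, at every internal vertex $v$, we have chosen $s_1(v) + s_2(v) = n(v)$, the size of the subformula rooted at $v$.

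\textbf{Paragraph 1 (Correctness).} I would argue $f_{P_\varphi} = \varphi$ by induction on the composition steps. By construction $P_\varphi$ is obtained from the elementary span programs $P_v$ of Eq.~\eqnref{e:andorPv}, each of which correctly computes the local gate $g_v$ by \defref{t:andorspanprogramdef}. \cite[Theorem 4.3]{Reichardt09spanprogram} says that direct-sum composition computes the composed function, and \cite[Prop.~4.7]{Reichardt09spanprogram} says the same for reduced tensor-product composition. Composing along the tree in any order therefore yields a span program computing~$\varphi$, independent of where the checkpoints have been placed.

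\textbf{Paragraph 2 (Witness size, upper bound).} I would prove by induction on subformulas $\psi$ of $\varphi$ that the corresponding composed span program $P_\psi$ satisfies $\wsize{P_\psi} \le \sqrt{|\psi|}$, where $|\psi|$ is the size of $\psi$. The base case is a single leaf, where the trivial span program has witness size $1$. For the inductive step at an internal vertex $v$ with children computing $P_{\psi_1}$ and $P_{\psi_2}$ (with $\wsize{P_{\psi_i}} \le \sqrt{s_i(v)}$ by hypothesis), \claimref{t:andorspanprogramwsize} gives
\[
\wsizeS{P_v}{(\sqrt{s_1(v)},\sqrt{s_2(v)})} = \sqrt{s_1(v)+s_2(v)} = \sqrt{|\psi|}.
\]
Whether the edge from $v$ to the child $\psi_i$ is composed by reduced tensor-product (along a path) or direct-sum (across a checkpoint), the corresponding composition bound from \cite[Theorem 4.3]{Reichardt09spanprogram} or \cite[Prop.~4.7]{Reichardt09spanprogram} — namely $\wsizeS{Q^\oplus}{s} \le \wsizeS{P}{r}$ and $\wsizeS{Q^\rtensor}{(r,s)} \le \wsizeS{P}{(\wsizeS{P'}{r},s)}$ with $r_j = \wsize{P^j}$ — allows the cost $\sqrt{s_i(v)}$ on the $i$th input of $P_v$ to be replaced by a span program of witness size at most $\sqrt{s_i(v)}$. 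Propagating this through both composition types gives $\wsize{P_\varphi} \le \sqrt{n}$.

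\textbf{Paragraph 3 (Witness size, lower bound).} For the matching lower bound I would appeal to the adversary bound: since $P_\varphi$ is a span program computing $\varphi$, $\wsize{P_\varphi} \ge \ADVpm(\varphi)$, and \cite{BarnumSaks04readonce} gives $\ADVpm(\varphi) = \sqrt{n}$ for any read-once AND-OR formula of size $n$. Combined with the upper bound this yields $\wsize{P_\varphi} = \sqrt{n}$.

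\textbf{Main obstacle.} There is no real obstacle here: both composition theorems have been invoked several times already in this paper and in \cite{Reichardt09spanprogram,Reichardt09unbalancedformula}, and the only thing to check is that the cost-vector bookkeeping respects the identity $s_1(v)+s_2(v)=n(v)$ at every gate. The only subtlety is ensuring that the hybrid order of composition (reduced tensor-product within a path, direct-sum across checkpoints) doesn't spoil the inductive invariant, which follows because the witness-size bound for each composition step only uses the \emph{witness size} of the inner span program — not any finer structural feature — and that quantity is preserved inductively.
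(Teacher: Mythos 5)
Your argument is correct and, as far as one can tell, is exactly what the paper's one-line citation ("From Claim~3.2 and the composition results [Theorem~4.3, Prop.~4.7 of~\cite{Reichardt09spanprogram}], we obtain:") is compressing: the composition theorems give $f_{P_\varphi}=\varphi$ and the telescoping upper bound $\wsize{P_\varphi}\le\sqrt n$, with the matching lower bound supplied by $\wsize{P}\ge\ADVpm(f_P)=\sqrt n$. You correctly identified that the only thing the hybrid composition order could potentially spoil is the cost-vector bookkeeping, and that it cannot, since both composition rules bound the outer witness size using only the inner span programs' witness sizes, which by \claimref{t:andorspanprogramwsize} are exactly $\sqrt{s_j(v)}$ at every node.
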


In order to apply \thmref{t:generalspanprogramalgorithmnonblackbox}, the basic idea is to treat the gates along each checkpointed path $\xi$ as a single grouped gate, and to analyze the direct-sum composition of these grouped gates as in the proof of~\cite[Theorem~1.11]{Reichardt09unbalancedformula}.  There are two steps to this analysis.  
\begin{enumerate}
\item
First, we argue that $\norm{\abst(A_{G_{P_\varphi}})} = O(1)$ and $G_{P_\varphi}$ has maximum degree~$O(\sqrt n)$.  
\item
Second, to bound the full witness size we apply \lemref{t:wsizefcompose} to the direct-sum composition of the grouped gates, similarly to~\cite[Lemmas~3.3, 4.4]{Reichardt09unbalancedformula}.  
\end{enumerate}

Let us establish a bound on $\norm{A_{G_{P_\varphi}}}$.  Since $P_\varphi$ has nonnegative entries, $\abst(A_{G_{P_\varphi}}) = A_{G_{P_\varphi}}$.  

\begin{lemma} \label{t:andorunbalancednorm}
The norm of the adjacency matrix $A_{G_{P_\varphi}}$ is at most $2 (2 \sqrt 2 e + 1) = O(1)$.
\end{lemma}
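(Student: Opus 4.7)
The plan is to combine the direct-sum norm bound (Lemma~\ref{t:directsumnorm}) with an inductive application of Lemma~\ref{t:reducedtensorproductcomposeinnerandornorm} along each path, and to use the checkpoint rule \eqnref{e:checkpointrule} to cap the accumulated multiplicative growth.

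First, I would peel off the outer direct-sum composition. By construction $P_\varphi$ is the direct-sum composition, across checkpointed edges, of the per-path span programs $P_\xi$. Each $P_\xi$ is obtained by reduced tensor-product composition of the single-gate span programs $P_{\AND}(s_1,s_2)$ and $P_{\OR}(s_1,s_2)$ from \defref{t:andorspanprogramdef}, all of which are strict and monotone; reduced tensor-product composition preserves monotonicity, so every $P_\xi$ is monotone. The monotone case of \lemref{t:directsumnorm} then gives
\[
\norm{\abst(A_{G_{P_\varphi}})} \le 2\max_\xi \norm{\abst(A_{G_{P_\xi}})}.
\]

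Next, I would bound $\norm{\abst(A_{G_{P_\xi}})}$ for a fixed path $\xi$ by induction from the top of $\xi$ downward. The base case is the span program of the top gate on $\xi$; a short direct calculation on the biadjacency matrix of either $P_{\AND}$ or $P_{\OR}$ shows $\norm{\abst(A_{G_{P_v}})}^2 \le 1+\alpha(v)^2 \le 1+\sqrt 2$. At each subsequent step, one additional gate is composed into the first (larger) input of the current span program. Because the path-ordering of the checkpointing construction places the unmarked edge on the ``size-$s_1$'' side, and because for both $P_{\AND}$ and $P_{\OR}$ one has $|I_{1,1}|=1$, the composed span program always has a singleton $I_{1,1}$, so \lemref{t:reducedtensorproductcomposeinnerandornorm} applies at every step and multiplies the squared norm by $N_v := (\sqrt{s_1(v)}+\sqrt{s_2(v)})/\sqrt{s_1(v)+s_2(v)}$. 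Iterating,
\[
\norm{\abst(A_{G_{P_\xi}})}^2 \;\le\; \Big(\prod_{v\in\xi} N_v\Big)\cdot (1+\sqrt 2).
\]

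Finally, I would invoke the checkpoint rule. Since $\alpha(v)^2=\epsilon(v)^2=N_v$, the second step of the checkpointing construction enforces exactly $\prod_{v\in\xi} N_v^{1/2}\le 2^{1/4}\sqrt e$ on every path, giving $\prod_{v\in\xi} N_v\le \sqrt 2\,e$. Substituting yields $\norm{\abst(A_{G_{P_\xi}})} \le \sqrt{(1+\sqrt 2)\sqrt 2\,e}$, which after a crude overestimate is $\le 2\sqrt 2\,e+1$, and the lemma follows from the factor of $2$ in Step~1.

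The main obstacle is conceptual rather than computational: it is making sure that the path composition really fits the hypothesis $|I_{1,1}|=1$ of \lemref{t:reducedtensorproductcomposeinnerandornorm} at each stage, so that the single inequality of that lemma can be iterated cleanly. Once that is set up, the checkpoint rule is tailored precisely to the growth factor $N_v$ appearing in \lemref{t:reducedtensorproductcomposeinnerandornorm}, and the rest is a one-line telescoping product.
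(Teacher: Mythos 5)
Your high-level plan is right — peel off the direct-sum composition via \lemref{t:directsumnorm}, then control the per-path norm by iterating \lemref{t:reducedtensorproductcomposeinnerandornorm} and invoking the checkpoint rule \eqnref{e:checkpointrule}. But there is a genuine gap in the middle step: \lemref{t:reducedtensorproductcomposeinnerandornorm} bounds the growth of $\norm{\abst(\biadj_{G_{Q^\rtensor}(1^{2+n})})}$, the \emph{biadjacency matrix} of the bipartite part of the graph, not the growth of $\norm{\abst(A_{G_{P_\xi}})}$, the full adjacency matrix. These are genuinely different objects: $A_{G_{P_\xi}}$ contains not just the biadjacency block $B = \biadj_{G_{P_\xi}(1^{J+1})}$ but also an extra block $C$ (a zero column followed by an identity) coming from the input-boundary half-edges. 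Your ``base case'' calculation $\norm{\abst(A_{G_{P_v}})}^2 \le 1+\alpha(v)^2$ is in fact the squared norm of the \emph{biadjacency} matrix of a single AND or OR gate; the actual adjacency-matrix norm of $G_{P_v}$ is strictly larger because of the boundary edges (e.g.\ for $P_{\AND}$ with $\alpha_1=\alpha_2$ the squared adjacency norm comes out around $2.93$ versus $1+\sqrt 2\approx 2.41$). Once you iterate and relabel the biadjacency bound as an adjacency bound, the argument no longer tracks the quantity you actually need.

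What the paper does to close this gap is to decompose the adjacency matrix of each path graph explicitly as
\[
A_{G_{P_\xi}} = \left(\begin{matrix} 0 & 0 & B \\ 0 & 0 & C \\ B^\dagger & C^\dagger & 0 \end{matrix}\right)
\]
and then bound $\norm{A_{G_{P_\xi}}}$ by the triangle inequality in terms of $\norm{B}$ and $\norm{C}=1$ separately, after which the iterated \lemref{t:reducedtensorproductcomposeinnerandornorm} plus the checkpoint rule is applied to $\norm{B}$ alone. You need this intermediate step; without it the induction is bounding the wrong matrix. (A smaller stylistic point: your induction starts at the top gate with base $1+\sqrt 2$ yet still multiplies by $N_v$ over \emph{all} internal $v\in\xi$, which double-counts the top gate. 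This happens to go in the safe direction since $N_v\ge 1$, but the paper avoids the issue entirely by iterating from the trivial $(1,1)$ biadjacency row, so the product over $v\in\xi$ is exactly what the checkpoint rule controls.)
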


\begin{proof}
By \lemref{t:directsumnorm}, it suffices to bound the norm of $A_{G_{P_\xi}}$ for any checkpointed path $\xi$.  

By definition, if a path $\xi$ involves $J$ internal vertices, 
\beq
A_{G_{P_\xi}} = 
\left(\begin{matrix}
0 & 0 & B \\
0 & 0 & C \\
B^\dagger & C^\dagger & 0
\end{matrix}\right)
\eeq
where $B$ is the biadjacency matrix $\biadj_{G_{P_\xi}(1^{J+1})}$ and $C$ is a column of zeros followed by an identity matrix.  Therefore we bound
\beq
\norm{A_{G_{P_\xi}}} \leq \left\lVert\left(\begin{matrix} 0 & B \\ B^\dagger & 0 \end{matrix}\right)\right\rVert +\left\lVert\left(\begin{matrix} 0 & C \\ C^\dagger & 0 \end{matrix}\right)\right\rVert = \norm{B}^2 + 1
 \enspace .
\eeq
Now from \lemref{t:reducedtensorproductcomposeinnerandornorm} and the checkpoint rule Eq.~\eqnref{e:checkpointrule}, $\norm{B}^2 \leq \sqrt 2 e \norm{(\begin{matrix}1 & 1\end{matrix})}{}^2 = 2 \sqrt 2 e$.  
\end{proof}

For a path $\xi$, let $\psi(\xi)$ be the AND-OR formula that is the composition of the gates along the path $\xi$.  $\psi(\xi)$ is maximally unbalanced, so \lemref{t:reducedtensorproductANDORcomposemaxunbalanced} applies to $P_\xi = P_{\psi(\xi)}$.  In particular, this allows us to bound the maximum degree of a vertex in $G_{P_\xi}$, and to bound the lengths of witnesses to $f_{P_{\psi(\xi)}}(x) = 0$: 

\begin{corollary} \label{t:andorunbalanceddegree}
The maximum degree of a vertex in $G_{P_\varphi}$ is $O(\sqrt n)$.  
\end{corollary}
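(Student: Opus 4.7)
The plan is to bound the degree at every vertex of $G_{P_\varphi}$ by first analyzing each single-path span program $G_{P_\xi}$ and then tracking how direct-sum composition combines paths at checkpoint vertices.

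First, I would bound the number $J_\xi$ of internal vertices on any path $\xi$ by $O(\log n)$. By the first checkpointing rule, the unmarked edge at each internal vertex points to its larger subformula, so along any path each step reduces the subformula size by at most a factor of two. Hence $J_\xi \leq \log_2 n$. The second checkpointing rule can only shorten paths further, so the bound is preserved.

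Next, I would apply \lemref{t:reducedtensorproductANDORcomposemaxunbalanced} to the maximally unbalanced formula $\psi(\xi)$ corresponding to the path $\xi$. This yields $\dim V = |U| = |T_F| \leq J_\xi + 1$. The biadjacency matrix $B_{G_{P_\xi}(1^{J_\xi+1})}$ therefore has at most $|T_F|$ rows and $J_\xi + 2$ columns (target together with $J_\xi + 1$ inputs), so every vertex of $G_{P_\xi}$ has degree at most $\max\{|T_F|, J_\xi+2\} = O(J_\xi) = O(\log n)$.

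Finally, I would analyze how \defref{t:directsumcomposedef} glues the $G_{P_\xi}$ across checkpoints. A checkpoint vertex $c$ is at once an input in the upper path $P_\xi$ and (essentially) the target of the lower path $P_{\xi'}$ cut off at that edge, so its degree in $G_{P_\varphi}$ is bounded by the sum of its two contributing degrees, which is $O(\log n)$. Every other vertex belongs to a unique $G_{P_\xi}$ and inherits the same bound. Combining, the maximum degree of $G_{P_\varphi}$ is $O(\log n)$, well within the claimed $O(\sqrt n)$. The only subtle step is verifying that no additional unexpected cross-path edges are introduced at $c$: this follows because the modified input vector $\ket{v_i}_V + \ket{i}\otimes\ket{t^{jc}}$ in \defref{t:directsumcomposedef} affects only the $\ket{i}$ direction of the new component, so the degrees from the two paths combine cleanly and additively at $c$.
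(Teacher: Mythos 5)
Your proof has a genuine gap in the very first step. You claim that the first checkpointing rule gives $J_\xi \leq \log_2 n$ because ``along any path each step reduces the subformula size by at most a factor of two.'' But this is a bound in the wrong direction. The unmarked edge points to the \emph{larger} of the two input subformulas, so following a path, the subformula size can shrink by as little as one per step. Concretely, for a maximally unbalanced (skew) formula, after the first checkpointing rule there is a single path containing all $n-1$ gates; the bound $J_\xi \leq \log_2 n$ is off by an exponential. ``Reduces by at most a factor of two'' only gives a lower bound on the subformula size at each step, which is the opposite of what you need to bound the path length.

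Consequently your claimed $O(\log n)$ degree bound is not established (and is in fact not what the paper proves; it only claims $O(\sqrt n)$). The correct bound on $J$ requires the \emph{second} checkpointing rule. The paper observes that each path has $\prod_{v\in\xi}\alpha(v) \leq 2^{1/4}\sqrt e$, while each individual factor satisfies $\alpha(v)^2 > 1 + \frac{1}{2\sqrt n}$ because $s_2(v)\geq 1$ and $s_1(v)<n$; since each step multiplies the running product by a quantity bounded away from $1$ by $\Omega(1/\sqrt n)$, this forces $J = O(\sqrt n)$. The rest of your argument — counting vertices of $G_{P_\xi}$ via \lemref{t:reducedtensorproductANDORcomposemaxunbalanced}, observing that direct-sum composition at most doubles degrees at checkpoint vertices — matches the paper's reasoning and is fine, but the final answer should be $O(\sqrt n)$, not $O(\log n)$.
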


\begin{proof}
Since $P_\varphi$ is constructed by direct-sum composition of span programs $P_\xi$, the maximum degree of a vertex in $G_{P_\varphi}$ is at most twice the maximum degree of a vertex in a graph $G_{P_\xi}$, which is at most twice the number of vertices in $G_{P_\xi}$.  

Fix a path $\xi$ with $J$ gates.  By \lemref{t:reducedtensorproductANDORcomposemaxunbalanced}, the number of vertices in $G_{P_\xi}(1^{J+1})$ is $1 + \abs{T_F} + (J+1) \leq 2J + 3$.  Now $\prod_{v \in \xi} \alpha(v) \leq 2^{1/4} \sqrt e$, but each $\alpha(v)$ satisfies 
\beq
\alpha(v)^2 > \frac{1 + \frac{1}{\sqrt n}}{\sqrt{1 + \frac 1 n}} > 1 + \frac{1}{2 \sqrt n}
 \enspace ,
\eeq
where in the first inequality we used $s_2(v) \geq 1$ and $s_1(v) < n$.  Hence $J = O(\sqrt n)$.  
\end{proof}

\begin{corollary} \label{t:andorunbalancedwitnessfalse}
For a checkpointed path $\xi$, let $\psi = \psi(\xi)$, and recall from \lemref{t:reducedtensorproductANDORcomposemaxunbalanced} the definitions of $T_F = T_{\AND} \cup \{1\}$ and of the maximal false inputs $x^j$ for $j \in T_F$.  Let $m$ be the size of the subformula of $\varphi$ rooted at the root of $\xi$.  

Then for any $j \in T_F$, $\ket{w'_j} = \ket{x^j} / \braket{t}{x^j}$ is a witness for $f_{P_\xi}(x^j) = 0$, with 
\beq
\norm{\ket{w'_j}}{}^2 \leq \begin{cases} \sqrt{2} e & \text{if $j = 1$} \\ \sqrt{2} e \sqrt m & \text{otherwise} \end{cases}
\eeq
\end{corollary}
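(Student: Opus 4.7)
The witness property is immediate from the last sentence of \lemref{t:reducedtensorproductANDORcomposemaxunbalanced}, so $\norm{\ket{w'_j}}^2 = 1/|\braket{t}{x^j}|^2$ and the goal reduces to a lower bound on $|\braket{t}{x^j}|$. I would compute this inner product from the explicit formula \eqnref{e:reducedtensorproductANDORcomposemaxunbalancedtarget} for $\ket t$: it picks out a single coefficient, a product over the gates $v$ in $\xi$ of either $\alpha_{\iota(j,v)}(v)$ or $\alpha(v)$ (according as $v$ lies on $\gamma_j$ or not) when $g_v$ is AND, with each OR gate contributing $\delta(v)=1$. Because $\psi(\xi)$ is a chain, the path $\gamma_j$ enters every gate above $g_{j-1}$ via its child-$1$ side (the unmarked, big subformula), giving the squared factor $\alpha_1(v)^2=(s_1(v)/s_p(v))^{1/2}$; for $j=1$ this applies at every gate of $\gamma_1$, while for $j\ge 2$ the sole exception is $g_{j-1}$, which the path enters via its child-$2$ (marked) side, giving $\alpha_2(g_{j-1})^2=(s_2(g_{j-1})/s_p(g_{j-1}))^{1/2}$.

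The key tool is the pair of termwise inequalities, valid whenever $s_1\ge s_2>0$,
\[
\sqrt{s_p/s_1}\;\le\;\frac{\sqrt{s_1}+\sqrt{s_2}}{\sqrt{s_p}},\qquad \frac{\sqrt{s_p}}{\sqrt{s_1}+\sqrt{s_2}}\;\le\;\frac{\sqrt{s_1}+\sqrt{s_2}}{\sqrt{s_p}},
\]
together with the trivial $1\le(\sqrt{s_1}+\sqrt{s_2})/\sqrt{s_p}$. The right-hand side is exactly $\alpha(v)^2=\epsilon(v)^2$, the factor that the checkpoint rule \eqnref{e:checkpointrule} controls: squaring that rule gives $\prod_{v\in\xi}\alpha(v)^2\le \sqrt 2\,e$. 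Applied gate by gate, these inequalities upper-bound every factor of $1/|\braket{t}{x^j}|^2$ corresponding to a gate $v\ne g_{j-1}$ by the matching factor in the checkpoint product. For $j=1$ no gate is anomalous, so this immediately yields $\norm{\ket{w'_1}}^2\le \sqrt 2\,e$.

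For $j\ge 2$ there remains the anomalous $1/\alpha_2(g_{j-1})^2=\sqrt{s_p/s_2}$ at $g_{j-1}$. I would handle it by multiplying and dividing by $\alpha(g_{j-1})^2$, folding one copy of $(\sqrt{s_1}+\sqrt{s_2})/\sqrt{s_p}$ back into the checkpoint product and leaving the residual $\sqrt{s_p/s_2}\cdot\sqrt{s_p}/(\sqrt{s_1}+\sqrt{s_2})=s_p/(\sqrt{s_1 s_2}+s_2)$ evaluated at $g_{j-1}$. Setting $t=s_1/s_2\ge 1$ turns this into $(t+1)/(\sqrt{t}+1)$, and the elementary inequality $\sqrt{t+1}\le\sqrt{t}+1$ gives $(t+1)/(\sqrt{t}+1)\le\sqrt{t+1}=\sqrt{s_p/s_2}$; combining with $s_2(g_{j-1})\ge 1$ and $s_p(g_{j-1})\le m$ bounds the residual by $\sqrt m$, yielding $\norm{\ket{w'_j}}^2\le \sqrt 2\,e\sqrt m$. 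The main obstacle is purely bookkeeping: isolating the single anomalous gate $g_{j-1}$ cleanly from the product that the checkpoint rule directly controls, and verifying that the extracted residual has a clean $\sqrt{m}$ bound.
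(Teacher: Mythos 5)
Your proof is correct and follows essentially the same route as the paper's: reduce to bounding $1/|\braket{t}{x^j}|^2$ via the explicit target coefficient, use $1/\alpha_1(v)^2\leq\alpha(v)^2$ at the non-anomalous gates together with the checkpoint rule $\prod_{v\in\xi}\alpha(v)^2\leq\sqrt2\,e$, and absorb the one anomalous factor $1/\alpha_2(v)^2=\sqrt{s_p/s_2}\leq\sqrt m$. Your multiply-and-divide by $\alpha(g_{j-1})^2$ is a harmless detour---since $\alpha(v)^2\geq1$, the paper just bounds the anomalous factor by $\sqrt m$ directly and lets the extra $\alpha(v)^2$ factors only loosen the bound.
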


\begin{proof}
The coefficient $\braket t {x^j}$ has been computed in Eq.~\eqnref{e:reducedtensorproductANDORcomposemaxunbalancedtarget}.  Since $\alpha(v) \geq 1$ and $\delta(v) = 1$ always, we need to place an upper bound on $1 / \abs{\braket{t}{x^j}}^2 \leq \prod_{v \in \gamma_j} \alpha_{\iota(j, v)}(v)^{-2}$.  

Note that 
\beq
\frac{1}{\alpha_1(v)^2} = \sqrt{\frac{s_1(v) + s_2(v)}{s_1(v)}} = \sqrt{1 + r(v)}
 \enspace ,
\eeq
where $r(v) = s_2(v) / s_1(v) \leq 1$.  Since $\alpha(v)^2 = \big(1 + \sqrt{r(v)}\big) / \sqrt{1 + r(v)}$, therefore $\frac{1}{\alpha_1(v)^2} \leq \alpha(v)^2$.  On the other hand, the best bound we can place on $\frac{1}{\alpha_2(v)^2}$ is, since $s_2(v) \geq 1$, 
\beq
\frac{1}{\alpha_2(v)^2} = \sqrt{\frac{s_1(v) + s_2(v)}{s_2(v)}} \leq \sqrt{s_1(v) + s_2(v)}
 \enspace .
\eeq

When $j = 1$, $\iota(j, v) = 1$ for every $v$, so we obtain the claimed bound 
\beq
\norm{\ket{w'_j}}^2 \leq \prod_{v \in \xi} \alpha(v)^2 \leq \sqrt 2 e
 \enspace .
\eeq
For $j > 1$, all but one of the indices $\iota(j, v)$ will be $1$, and for the last internal vertex $v$ on $\gamma_j$, $\iota(j, v)$ will be $2$.  It follows that $\norm{\ket{w'_j}}^2 \leq (\prod_{v \in \xi} \alpha(v)^2) \sqrt m \leq \sqrt 2 e \sqrt m$.  
\end{proof}

Let $\varphi'$ be a formula whose internal vertices are the checkpointed paths in $\varphi$ and whose leaves correspond to the inputs of $\varphi$.  The formula $\varphi'$ is \emph{not} an AND-OR formula; its gate at an internal vertex $v'$ is the composition of the gates in the corresponding path $\xi(v')$ in $\varphi$, $g'_{v'} = f_{P_{\xi(v')}}$.  However, $\varphi'$ and $\varphi$ compute the same boolean function.  Let $r'$ be the root of $\varphi'$, corresponding to the path in $\varphi$ that includes $r$.  For each vertex $v'$ of $\varphi'$, let $\varphi'_{v'}$ be the subformula rooted at $v'$, and let $P_{\varphi'_{v'}}$ be the span program obtained by direct-sum composition of the span programs $P_{\xi(w')}$ for all vertices $w'$ in $\varphi'_{v'}$.  Then $P_\varphi = P_{\varphi'_{r'}}$.  Let $s_{v'}$ be the number of inputs to $\varphi'_{v'}$, so $\wsize {P_{\varphi'_{v'}}} = \sqrt{s_{v'}}$.  

Call an internal vertex $v'$ of $\varphi'$ \emph{small} if the corresponding path $\xi(v')$ has an endpoint in the set $\cS$ from Eq.~\eqnref{e:smallpaths}.  Aside from the root $r'$, every small vertex $v'$ has subformula size $s_{v'}$ no larger than the size of its sibling subformula in $\varphi$; hence the name.  

\begin{lemma} \label{t:andorsizegrowth}
Let $v'$ be an internal vertex of $\varphi'$, with children $c'_1, c'_2, \ldots, c'_{J+1}$, sorted in decreasing order of 
their distances from $r'$, and with $s_{c'_1} \geq s_{c'_2}$ (see \figref{f:gatepath}).  Then for $j \geq 2$, $s_{c'_j} \leq s_{v'}/2$.  If $v'$ is not a small vertex, then $s_{c'_1} \leq s_{v'} - \frac{1}{\sqrt 2} \sqrt{s_{v'}}$.  
\end{lemma}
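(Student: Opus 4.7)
The plan is to think of the path $\xi(v')$ as a top-to-bottom chain of gates $u_1, \ldots, u_J$, with $u_1$ the root of the subformula at $v'$ and $s_{v'} = s_{u_1}$. At each $u_i$ the marked input edge leads to a child subformula of size $s_{m_i} \leq s_{u_i}/2$ (the defining property of marked edges from Step~1 of checkpointing), while the unmarked edge leads to $u_{i+1}$ for $i < J$; at $u_J$ the unmarked edge leads to a ``bottom'' subformula $b$ (either a leaf of $\varphi$ or the subformula below a Step-2 checkpoint). Thus the children of $v'$ in $\varphi'$ are exactly $\{m_1, \ldots, m_J, b\}$, and $s_{v'} = s_b + \sum_i s_{m_i}$. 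Because $b$ is the unmarked (larger) sibling of $m_J$, we have $s_b \geq s_{m_J}$; combined with the sorting by decreasing distance from $r'$ and the tiebreaker $s_{c'_1}\geq s_{c'_2}$, this identifies $c'_1 = b$, while $c'_2, \ldots, c'_{J+1}$ are the marked subformulas $m_i$ in some order.

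Part~(a) then follows immediately: any $c'_j$ with $j \geq 2$ equals some $m_i$, so $s_{c'_j} \leq s_{u_i}/2 \leq s_{u_1}/2 = s_{v'}/2$.

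For part~(b), the bound $s_{c'_1} \leq s_{v'} - \frac{1}{\sqrt 2}\sqrt{s_{v'}}$ rearranges to $\sum_i s_{m_i} \geq \sqrt{s_{v'}}/\sqrt 2$. Since $v'$ is not small, the Step-2 rule \eqnref{e:checkpointrule} gives $\prod_{i=1}^J \alpha(u_i) \geq \sqrt e$ (using that $\alpha(v)=\epsilon(v)$ in \defref{t:andorspanprogramdef}, so the choice of gate type does not matter). Writing $\alpha(u_i)^2 = \sqrt{1-r_i} + \sqrt{r_i}$ with $r_i = s_{m_i}/s_{u_i}$, the elementary bounds $\alpha(u_i)^2 \leq 1 + \sqrt{r_i}$ and $1 + x \leq e^x$ give $\prod_i \alpha(u_i)^2 \leq \exp\!\big(\sum_i \sqrt{r_i}\big)$, hence $\sum_i \sqrt{s_{m_i}/s_{u_i}} \geq 1$.

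The main obstacle is converting this last inequality into a lower bound on $\sum_i s_{m_i}$: the naive substitution $s_{u_i} \leq s_{v'}$ goes in the wrong direction. The key trick is the complementary bound $s_{u_i} = s_b + \sum_{j \geq i} s_{m_j} \geq s_b$, which upgrades the inequality to $\sum_i \sqrt{s_{m_i}} \geq \sqrt{s_b}$; then, since each $s_{m_i} \geq 1$, we get $\sum_i s_{m_i} \geq \sum_i \sqrt{s_{m_i}} \geq \sqrt{s_b}$. A short case split finishes: if $s_b \geq s_{v'}/2$, then $\sum_i s_{m_i} \geq \sqrt{s_b} \geq \sqrt{s_{v'}}/\sqrt 2$; otherwise $\sum_i s_{m_i} = s_{v'} - s_b > s_{v'}/2 \geq \sqrt{s_{v'}}/\sqrt 2$, where the last step uses $s_{v'} \geq 2$, which holds for any internal vertex of $\varphi'$.
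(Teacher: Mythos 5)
Your proof is correct, and for the second part it takes a genuinely different route from the paper's. The paper parametrizes each gate by the ratio $r_j$ of its marked to its unmarked subformula size, applies Jensen's inequality after the change of variables $x_j = \log\frac{1+\sqrt{r_j}}{\sqrt{1+r_j}}$ to conclude $\prod_j(1+r_j)\geq e^{1/J} > 1 + 1/J$, and then case-splits on whether the path length $J$ exceeds $\frac{1}{\sqrt 2}\sqrt{s_{v'}}$. You instead linearize the checkpoint constraint immediately via $\alpha(u_i)^2 = \sqrt{1-r_i}+\sqrt{r_i} \leq 1 + \sqrt{r_i}$ and $1+x\leq e^x$ to get $\sum_i\sqrt{s_{m_i}/s_{u_i}}\geq 1$, then bootstrap to $\sum_i s_{m_i}\geq\sqrt{s_b}$ using the two monotonicities $s_{u_i}\geq s_b$ and $s_{m_i}\geq\sqrt{s_{m_i}}$, and case-split on $s_b$ versus $s_{v'}/2$ instead of on $J$. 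Both arguments deliver the same constant $1/\sqrt 2$, but yours avoids Jensen's inequality and the convexity and $f(x)\geq x^2$ facts the paper's proof tacitly relies on, substituting the elementary observation that every marked subformula has size at least one; it is arguably more transparent about which structural properties of the checkpointing are actually in play. The final $s_{v'}\geq 2$ you invoke is justified as you say (any internal vertex of $\varphi'$ corresponds to a path containing at least one fan-in-two gate), and is implicitly used in the paper's case split too. Part (a) and the identification $c'_1 = b$ via the tiebreaker match the paper's reasoning.
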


\begin{figure}
\centering
\includegraphics[scale=1]{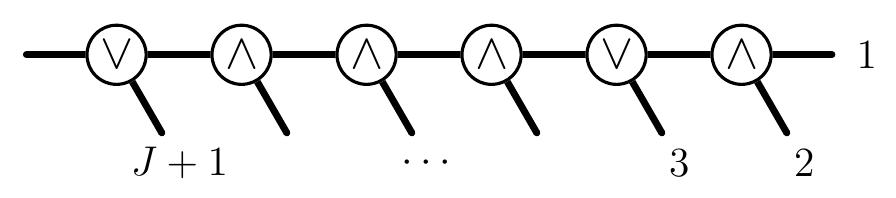}
\caption{A typical checkpointed path $\xi$ in $\varphi$ on $J = 6$ vertices.  The root $r$ is to the left.  Input subformulas are indexed from right to left; of $1$ and $2$, the larger subformula comes first.} \label{f:gatepath}
\end{figure}

\begin{proof}
In placing checkpoints, at every vertex $v$ the edge to the smaller input subformula is marked.  Thus the size of that subformula can be at most half the size of $\varphi_v$.  However the larger subformula can have size up to $s_v - 1$.  Assuming that $v'$ is not small, we will argue that in the second step of placing checkpoints, paths are allowed to grow long enough toward the root that the size increases significantly.  

Say that in the second step of placing checkpoints we begin with a single vertex whose input subformulas have size $s$ and $r_1 s$, where $r_1 \in [1/s, 1]$.  The next vertex toward the root will have another input subformula whose size is, say, $r_2 (s + r_1 s)$, where $r_2 \in [1/(s+r_1 s), 1]$.  Similarly the $j$th vertex towards the root $r$ will have a smaller input subformula of size $r_j s \prod_{k < j} (1 + r_k)$, so its two subformulas have total size $s \prod_{k \leq j} (1 + r_k)$.  Here, $r_j \leq 1$ and $1/r_j \leq s \prod_{k < j} (1 + r_k)$.  

Now it is possible that this path gets all the way to $\cS$ without a new checkpoint being placed.  Then this path corresponds to a small vertex, and we can make no strong claim about how much the size increases along the path.  For example, if the path \emph{starts} at the root $r$, then no further checkpoints can be placed, so we can only bound $s_{c'_1} \leq n-1$.  

Assume that a checkpoint is placed to terminate this path.  Then we have $\prod_j \frac{1 + \sqrt{r_j}}{\sqrt{1+r_j}} \geq e$.  We want to lower bound $\prod_j (1 + r_j)$.  Letting $x_j = \log \frac{1 + \sqrt{r_j}}{\sqrt{1+r_j}} \in (0, \frac12 \log 2]$, $\log (1 + r_j) = \log \frac{2}{1 + e^{x_j} \sqrt{2 - e^{2 x_j}}}$.  Applying Jensen's inequality gives $\prod_j (1 + r_j) \geq e^{1/J} > 1 + 1/J$, where $J$ is the number of vertices along the path.  

We have $s_{v'} = s_{c'_1} \prod_j (1 + r_j) \geq 2$.  If $J \geq \frac{1}{\sqrt 2} \sqrt{s_{v'}}$, then since at each step toward the root the size must increase by at least one, $s_{c'_1} \leq s_{v'} - J \leq s_{v'} - \frac{1}{\sqrt 2} \sqrt{s_{v'}}$, as claimed.  If $J < \frac{1}{\sqrt 2} \sqrt{s_{v'}}$, then using $s_{c'_1} < s_{v'} / (1 + 1/J)$ again gives $s_{c'_1} < s_{v'} - \frac{1}{\sqrt 2} \sqrt{s_{v'}}$.  
\end{proof}

In the maximally unbalanced formula, $s_{c'_1} = s_{v'} - O(\sqrt{s_{v'}})$, so aside from improving the constant $\frac{1}{\sqrt 2}$, the bound on $s_{c'_1}$ in \lemref{t:andorsizegrowth} is tight.  

\begin{lemma} \label{t:ANDORformulaevaluationtruefalsewitness}
Let $v'$ be an internal vertex of $\varphi'$, and let $m = s_{v'}$.  Then 
\beq \label{e:ANDORformulaevaluationtruefalsewitness}
\frac{ \wsizef {P_{\varphi'_{v'}}}}{ \sqrt m } \leq
\begin{cases}
\kappa \log m & \text{if $v'$ is not small} \\
\kappa \log m + \frac{\lambda}{\sqrt m} & \text{if $v'$ is small}
\end{cases}
\eeq
where $\lambda = \sqrt{2} e \approx 3.84$ and $\kappa = (1 + \frac 1 {\sqrt 2}) \lambda / \log 2 \approx 9.47$.  
\end{lemma}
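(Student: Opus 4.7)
The plan is to prove the bound by strong induction on $m = s_{v'}$, moving from leaves of $\varphi'$ up to the root, bounding $\rho(v') := \wsizef{P_{\varphi'_{v'}}}/\sqrt{s_{v'}}$. The base case is trivial. For the inductive step, I apply \lemref{t:wsizefcompose} to the direct-sum composition at $v'$, with $P_{\xi(v')}$ as the outer span program and $\{P_{\varphi'_{c'_j}}\}$ as the inner ones. Crucially, reduced tensor-product composition preserves strictness, so $\Ifree = \emptyset$ for $P_{\xi(v')}$, simplifying the true-case extra term to a constant $1$. In the false case, I select the witness $\ket{w'_{j_0}} = \ket{x^{j_0}}/\braket{t}{x^{j_0}}$ from \corref{t:andorunbalancedwitnessfalse}, picking the smallest $j_0 \in T_F$ with $x^{j_0}$ compatible with $y(x)$. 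This gives $\norm{\ket{w'_{j_0}}}^2 \leq \sqrt{2}e$ when $j_0 = 1$ and $\sqrt{2}e\sqrt{m}$ otherwise. Dividing the resulting bound by $\wsize{P_{\varphi'_{v'}}} = \sqrt{m}$ yields per-level additive corrections to $\rho(v')$ of at most $1/\sqrt{m}$ (true case), $\lambda/\sqrt{m}$ (false with $j_0 = 1$), or $\lambda$ (false with $j_0 > 1$).

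The key structural observation, read off from the description of $x^{j_0}$ in \lemref{t:reducedtensorproductANDORcomposemaxunbalanced}, is that when the witness $\ket{w'_{j_0}}$ with $j_0 > 1$ must be used, the active indices $k$ (those with $\braket{v_k}{w'_{j_0}} \neq 0$) all satisfy $k \geq j_0 \geq 2$. Hence the maximum in \lemref{t:wsizefcompose}'s $\sigma$ term is taken over children $c'_k$ with $k \geq 2$, each of which satisfies $s_{c'_k} \leq m/2$ by \lemref{t:andorsizegrowth}. Combined with \lemref{t:andorsizegrowth}'s bound $s_{c'_1} \leq m - \sqrt{m/2}$ for non-small $v'$ (which gives a natural-log size decrease of at least $1/\sqrt{2m}$), the argument is: the $\log 2$ drop in $\log s$ when descending to a $c'_k$ with $k \geq 2$ absorbs the constant-sized $\lambda$ correction, while the smaller $1/\sqrt{2m}$ drop toward $c'_1$ suffices to absorb the $O(1/\sqrt{m})$-sized corrections. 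A short case analysis over which child attains the $\sigma$ maximum, and over whether each relevant child is small, closes the induction at the stated $\kappa$ and $\lambda$.

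\textbf{Main obstacle.} The trickiest issue is that the false-case correction of $\lambda$ for $j_0 > 1$ does not shrink with $m$, so it cannot be absorbed by the modest $1/\sqrt{2m}$-sized log drop coming from moving to the largest child $c'_1$. The proof must exploit the structural fact above so that whenever this $\lambda$-sized correction is incurred, the induction actually descends to a sibling $c'_k$ of size at most $m/2$, providing a full $\log 2$ drop. The small-vertex bonus $\lambda/\sqrt{s_{c'_j}}$ in the inductive hypothesis must be absorbed jointly with the $\lambda$ correction in this same $\log 2$ budget, and this combined constraint is exactly what pins down the constant $\kappa \geq (1 + 1/\sqrt{2})\lambda/\log 2$ in the statement.
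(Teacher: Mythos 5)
Your proposal follows essentially the same route as the paper's proof: induction over $\varphi'$ (you on $m = s_{v'}$, the paper on height, equivalently), application of \lemref{t:wsizefcompose} with $P_{\xi(v')}$ as the outer span program, using strictness of the reduced-tensor-product span program to tame the true-case extra term, selecting the false witness from \corref{t:andorunbalancedwitnessfalse}, and the same key structural observation from \lemref{t:reducedtensorproductANDORcomposemaxunbalanced} that when $j_0 > 1$ the $\sigma$-maximum runs only over small siblings $c'_k$ with $k \geq j_0 \geq 2$, whose sizes \lemref{t:andorsizegrowth} then bounds by $m/2$. The case analysis (true/false, $j_0 = 1$ vs.\ $j_0 > 1$, small vs.\ not small) and the competing constraints that determine $\kappa$ are the same as in the paper.
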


\begin{proof}
The proof is by induction in the maximum distance from $v'$ to a leaf in $\varphi'$.  The base case is if all of $v'$'s inputs are leaves.  This case can be handled simultaneously to the induction step by letting $S = \emptyset$, so $\sigma = 1$, in the applications of \lemref{t:wsizefcompose} below.  

Take $v'$ an internal vertex, with children $c'_1, c'_2, \ldots, c'_{J+1}$, sorted in decreasing order of 
their distances from $r'$, and with $s_{c'_1} \geq s_{c'_2}$.  Note that each vertex $c'_j$ with $j \geq 2$ must be either small or a leaf.  However, $c'_1$ is not a small vertex.  Let $\xi = \xi(v')$ be the corresponding checkpointed path in $\varphi$.  Consider an input $x \in \B^m$.  We will use \lemref{t:wsizefcompose} in combination with \lemref{t:andorsizegrowth} and \corref{t:andorunbalancedwitnessfalse}.  

If $\varphi'_{v'}(x) = 1$, then apply \lemref{t:wsizefcompose} with $S = \{ j \in [J+1] : \text{$c'_j$ is not a leaf} \}$.  Since $P_{\xi(v')}$ is a strict span program, Eq.~\eqnref{e:wsizefcomposetrue} gives 
\beq
\frac{ \wsizefx {P_{\varphi'_{v'}}} x }{ \sqrt m } \leq \sigma + \frac{1}{\sqrt m}
 \enspace ,
\eeq
where $\sigma = \max_{j \in S} \wsizef {P_{\varphi'_{c'_j}}} / \sqrt{s_{c'_j}}$.  Now by \lemref{t:andorsizegrowth}, for $j \in S$ with $j \geq 2$, $s_{c'_j} \leq m/2$ and by induction 
\beq
\wsizef {P_{\varphi'_{c'_j}}} / \sqrt{s_{c'_j}} \leq \kappa \log {s_{c'_j}} + \lambda / \sqrt{s_{c'_j}}
 \enspace .
\eeq
Since $\kappa > \lambda / (2 \sqrt 2)$, the right-hand side is an increasing function of $s_{c'_j}$, so may be bounded by $\kappa \log(m/2) + \lambda / \sqrt{m/2}$.  For the case $j = 1$, by induction $\wsizef {P_{\varphi'_{c'_1}}} / \sqrt{s_{c'_1}} \leq \kappa \log {s_{c'_1}}$.  

If $v'$ is not small, then \lemref{t:andorsizegrowth} gives $s_{c'_1} \leq m - \sqrt{m/2}$, and therefore 
\beq
\sigma \leq \max \Big\{ \kappa \log (m - \sqrt{m/2}) , \kappa \log(m/2) + \frac{\lambda}{\sqrt{m/2}} \Big\}
 \enspace .
\eeq
The first term, bounded by $\kappa (\log m - 1 / \sqrt{2 m})$, is at most $\kappa \log m - 1 / \sqrt m$, provided that $\kappa \geq \sqrt 2$.  The second term satisfies the same bound provided $\kappa \geq (\lambda + \frac 1 {\sqrt 2}) / \log 2$.  

On the other hand, if $v'$ is small, then we can only be sure that $s_{c'_1} \leq s_{v'} - 1 = m - 1$.  This is all right, because Eq.~\eqnref{e:ANDORformulaevaluationtruefalsewitness} allows for more slack in this case.  In particular, it is enough to show that $\kappa \log (m/2) + \lambda / \sqrt{m/2} \leq \kappa \log m$, which indeed holds for $\kappa \geq \lambda/ \log 2$.  

Assume now that $\varphi'_{v'}(x) = 0$.  Let $x'$ be the input to $g'_{v'}$, i.e., the evaluations of $v'$'s input subformulas on $x$.  Let $y'$ be any minimal false input (\defref{t:maximalfalsedef}) to $g'_{v'}$ that lies above $x'$ (meaning that for every $k$, $y'_k \geq x'_k$).  In the notation of \lemref{t:reducedtensorproductANDORcomposemaxunbalanced}, $y' = x^j$ for some $j \in T_F$, and $\ket{w'_j} = \ket{x^j} / \braket t {x^j}$ is a witness for $f_{P_{\xi(v')}}(x') = 0$.  By Eq.~\eqnref{e:wsizefcomposefalse}, then, 
\beq\begin{split}
\frac{ \wsizefx {P_{\varphi'_{v'}}} x }{ \sqrt m }
&\leq 
\sigma(\bar x', \ket{w'_j}) + \frac{\norm{\ket{w'_j}}^2}{\sqrt m} \\
&\text{where $\sigma(\bar x', \ket{w'_j}) = \max_{\substack{k \in S : \\ \text{$x'_k = 0$ and $\braket{v_k}{w'_j} \neq 0$}}} \frac{\wsizef {P_{\varphi'_{c'_k}}}}{\sqrt{s_{c'_k}}}$}
 \enspace .
\end{split}\eeq
There are two cases to consider, $j > 1$ and $j = 1$.  

First, if $j > 1$, then by \lemref{t:reducedtensorproductANDORcomposemaxunbalanced}, $\braket{v_k}{w'_j} = 0$ for all $k < j$.  Therefore, $\sigma = \sigma(\bar x', \ket{w'_j})$ will only maximize over small vertices, $c'_k$ with $k > 1$.  By induction, $\sigma \leq \kappa \log (m/2) + \lambda / \sqrt{m/2}$.  On the other hand, \corref{t:andorunbalancedwitnessfalse} gives only $\norm{\ket{w'_j}}{}^2 \leq \lambda \sqrt m$.  Putting these bounds together, 
\beq\begin{split}
\frac{ \wsizefx {P_{\varphi'_{v'}}} x }{ \sqrt m }
&\leq
\kappa \log (m/2) + \lambda / \sqrt{m/2} + \lambda \\
&\leq
\kappa \log m
\end{split}\eeq
provided that $\kappa \geq 2 \lambda / \log 2$.  The above bound holds whether or not $v'$ is small.  

Next consider the case $j = 1$.  Then we will have a worse bound on $\sigma$, but \corref{t:andorunbalancedwitnessfalse} gives $\norm{\ket{w'_j}}^2 \leq \lambda$.  If $j = 1$ and $v'$ is not small, then \lemref{t:andorsizegrowth} gives $s_{c'_1} \leq m - \sqrt{m/2}$.  Using the induction hypothesis, we obtain 
\beq
\frac{ \wsizefx {P_{\varphi'_{v'}}} x }{ \sqrt m }
\leq
\max \Big\{ \kappa \log (m - \sqrt{m/2}), \kappa \log(m/2) + \frac{\lambda}{\sqrt{m/2}} \Big\} + \frac{\lambda}{\sqrt m}
 \enspace .
\eeq
This is indeed at most $\kappa \log m$ provided that $\kappa \geq (1 + \frac 1 {\sqrt 2}) \lambda / \log 2$.  

If $j = 1$ and $v'$ is small, then we can only be sure that $s_{c'_1} \leq m - 1$.  However, just as before, this is enough, since then $\kappa \log s_{c'_1} < \kappa \log m$, while $\kappa \log (m/2) + \lambda / \sqrt{m/2} \leq \kappa \log m$ for $\kappa \geq \lambda / \log 2$.  
\end{proof}

Substituting $v' = r'$ into \lemref{t:ANDORformulaevaluationtruefalsewitness} gives $\wsizef {P_{\varphi'_{r'}}} = \wsizef {P_\varphi} = O(\sqrt n \log n)$.  \lemref{t:andorunbalancednorm} bounds the norm of $A_{G_{P_\varphi}}$ and \corref{t:andorunbalanceddegree} bounds the maximum degree of a vertex.  Therefore, \thmref{t:generalspanprogramalgorithmnonblackbox} applies, completing the proof of \thmref{t:andor}.  
\end{proof}

\section{Open problems}

It is possible that our analysis here is loose.  \lemref{t:wsizefcompose} is conservative, as it uses the worst input ratio $\wsizef {P^j} / \wsize {P^j}$.  Potentially a more careful analysis of the full witness size across span programs $P_\xi$ could improve our upper bound.  However, we do not believe that the full witness size of $P_\varphi$ is $O(\sqrt n)$.  More ideas seem to be needed to find an $O(\sqrt n)$-query quantum algorithm for evaluating arbitrary size-$n$ AND-OR formulas, particularly if we wish the algorithm also to have poly-logarithmic time overhead after preprocessing.  One very simple idea would be to start with different span programs for AND and OR.  By~\cite[Lemma~4.12]{Reichardt09spanprogram}, there are many possibilities; the same invertible linear transformation can be applied to the target and all input vectors without changing the witness size.  We have unsuccessfully investigated choices to see whether reduced tensor-product composition alone would be sufficient to build an optimal span program for which $\norm{\abst(A_{G_{P_\varphi}})} = O(1)$, but have not studied other choices under hybrid span program composition.  

Some preprocessing will always be necessary if the formula $\varphi$ is presented poorly to the algorithm.  However, we would like the preprocessing to correspond to as natural a presentation of $\varphi$ as possible, and there is clearly more work to be done.  

\thmref{t:andor} and~\cite[Theorem~1.11]{Reichardt09unbalancedformula} can likely both be generalized to apply to formulas over a larger class of gates.  In particular, the idea of combining direct-sum and tensor-product span program composition might be useful in other contexts.  It leads to a dramatic speedup on AND-OR formulas that have not been rebalanced.  Since rebalancing and its effect on the adversary bounds is poorly understood for formulas over larger gate sets, this seems like promising technology to apply.  We have specialized to AND-OR formulas because they form an important class of formulas, and because the span programs for the AND and OR gates are especially simple.  Further applications will require studying other particular span programs more carefully.  

Although our algorithm can be analyzed without the span program formalism, it would not have been discovered without a certain familiarity with span programs.  Will span programs be useful for finding quantum algorithms for problems beyond formula evaluation?  One approach is to try to solve the semi-definite program (SDP) for the optimal span program witness size~\cite{Reichardt09spanprogram}.  This SDP is exponentially large, so presumably is difficult to solve exactly, but for special cases perhaps it can be solved to within a constant factor of the optimum.  Even if so, the span program this gives will typically correspond to a graph with large norm and high degree.  To turn this into an algorithm that is \emph{time} efficient, and not just query efficient, it will be useful to find techniques for breaking up large span programs, perhaps by using tailored semi-definite programs.  It is known how to convert a span program into a strict span program without affecting the witness size~\cite[Prop.~4.10]{Reichardt09spanprogram}.  However we do not know of any techniques for converting a span program with low witness size into a span program with low \emph{full} witness size that additionally has smaller norm and lower degree.

\section*{Acknowledgements}
I thank Andrew Childs and Robert {\v S}palek for helpful discussions.  Research supported by NSERC and ARO-DTO.

\bibliographystyle{alpha-eprint}
\bibliography{andor}

\end{document}